
\typeout{IJCAI--22 Instructions for Authors}


\documentclass{article}
\pdfpagewidth=8.5in
\pdfpageheight=11in
\usepackage{ijcai22}

\usepackage{times}
\usepackage{soul}
\usepackage{url}
\usepackage[hidelinks]{hyperref}
\usepackage[utf8]{inputenc}
\usepackage[small]{caption}
\usepackage{graphicx}
\usepackage{amsmath}
\usepackage{amsthm}
\usepackage{booktabs}
\usepackage{algorithm}
\usepackage{algorithmic}
\urlstyle{same}

\usepackage{amsmath}
\usepackage{hyperref} 
\hypersetup{
    colorlinks=true,
    citecolor=blue, 
    linkcolor=blue,
    filecolor=magenta,      
    urlcolor=blue,
linktocpage}
\usepackage{amsthm} 
\usepackage{mathtools} 
\usepackage{amsfonts}       

\usepackage{graphicx}
\usepackage{float}
\usepackage{subcaption}

\usepackage{extarrows}
\usepackage{thmtools}

\usepackage[toc,page,header]{appendix}
\usepackage{minitoc}

\newtheorem{theorem}{Theorem}
\newtheorem{corollary}{Corollary}
\newtheorem{lemma}{Lemma}
\newtheorem{proposition}{Proposition}
\newtheorem{definition}{Definition}

\newtheorem{example}{Example}

\DeclareMathOperator*{\argmax}{arg\,max}

\newcommand{\bfp}{\mathbf{p}}
\newcommand{\bfq}{\mathbf{q}}
\newcommand{\bfe}{\mathbf{e}}
\newcommand{\BR}{\operatorname{BR}}
\newcommand{\doubleR}{\mathbb{R}}







\pdfinfo{
/TemplateVersion (IJCAI.2022.0)
}

\title{On the Convergence of Fictitious Play: A Decomposition Approach\footnote{Copyright \href{https://ijcai-22.org/}{International Joint Conferences on Artificial Intelligence Organization} (IJCAI-ECAI-22 Conference). All rights reserved.}}
\author{
Yurong Chen$^1$\and
Xiaotie Deng$^1$\footnote{Corresponding Author}\and
Chenchen Li$^2$\and
David Mguni$^3$\and\\
Jun Wang$^4$\and
Xiang Yan$^5$\And
Yaodong Yang$^6$
\affiliations
$^1$Center on Frontiers of Computing Studies, School of Computer Science, Peking University \\
$^2$Business Growth BU, JD.com
$^3$Huawei R\&D
$^4$UCL
$^5$Huawei TCS Lab\\
$^6$Institute for AI, Peking University
\emails
chenyurong@pku.edu.cn, 
xiaotie@pku.edu.cn, 
lcc104@qq.com, 
davidmguni@hotmail.com, 
jun.wang@cs.ucl.ac.uk,
xyansjtu@163.com, 
yaodong.yang@pku.edu.cn
}

\begin{document}

\maketitle

\begin{abstract}
Fictitious play (FP) is one of the most fundamental game-theoretical learning frameworks for computing Nash equilibrium in $n$-player games; it builds the foundation for modern multi-agent learning algorithms. 
Although FP has provable convergence guarantees on zero-sum games and potential games, many real-world problems are often a mixture of both and the convergence property of FP has not been fully studied yet. 
In this paper, we extend the convergence results of FP on the combinations of such games and beyond. 
Specifically, we derive new conditions for FP to converge by leveraging game decomposition techniques.
We further develop a linear relationship unifying cooperation and competition in the sense that these two classes of games are mutually transferable. 
Finally, we offer analysis for the non-convergent example of FP, the Shapley game, and give sufficient conditions for  FP to   converge.
\end{abstract}

 \section{Introduction}
 \label{sec: introduction}

Solving Nash equilibrium (NE) \cite{nash1950equilibrium,deng2021complexity} in multi-player games has become a central interest in a variety of fields including but not limited to economics, computer science and artificial intelligence.
Among the many NE solvers, fictitious play (FP)   \cite{browniterative} is one of the most well-known learning algorithms. 
In FP, at each iteration, each player takes a best response to the empirical average of the opponent's previous strategies. 
It is guaranteed that FP dynamics converge to an NE on two-player zero-sum games \cite{Robinson1951iterative} and potential games \cite{Monderer1996ficitious,monderer1996potential,spg-david}, with no need to access the other player's utility information.
Thus, the design principle of FP (i.e., the iterative best-response dynamics) has inspired many other approximation solutions to NE.  
For example, in solving two-player zero-sum games, 
two representative methods are double oracle (DO) \cite{mcmahan2003planning,dinh2021online} and policy space response oracle (PSRO) \cite{lanctot2017unified,feng2021neural,perez2021modelling,liu2021unifying} where a subgame NE is adopted as the best-responding target and multi-agent reinforcement learning (MARL) algorithms \cite{yang2020overview} are applied to approximate the best response.     
Similarly, \citeauthor{heinrich2015fictitious} \shortcite{heinrich2015fictitious}
combined fictitious self-play with deep RL methods and demonstrated remarkable performance on Leduc Poker and Limit Texas Holdem at real-world scale. Besides its inspirations for modern MARL algorithms \cite{muller2020a,yang2018mean}, FP itself has been shown to have good performance of converging to approximate equilibria on some more general games \cite{candogan2013dynamics,Ostrovski2014payoff},
and is still a popular interest of research to the communities of both game theory and machine learning \cite{swenson2018best}.

However, the assumptions of zero-sum or potential games are rather limited. 
As \citeauthor{dasgupta2019survey} \shortcite{dasgupta2019survey} mentioned, adversarial learning has been modeled as a two-player zero-sum game, but chances are that the learner's loss may not equal the adversary's utility. 
Similar as zero-sum games characterizing full competition, potential games can be regarded as full cooperation.
While in most real applications, there are both competition and cooperation among players. 
For example, in a market, sellers for the same type of goods are not only players competing for the same group of buyers, but also collaborators attracting more buyers to the market.
More general results for FP to converge on combinations of competition and cooperation are needed, considering its current limited positive convergence results. 

 \noindent\textbf{Our techniques and results.} We leverage two game decomposition techniques, Hodge Decomposition~\cite{candogan2011flows} and Strategic Decomposition~\cite{hwang2020strategic}, to study the convergence of FP on mixtures of games modelling full competition and cooperation. The idea of game decomposition is to treat the set of games as a linear space and decompose a game into several simple basis games, whose NEs are easy to characterize. 
 Interestingly, in both decompositions, a game is made up of a competitive part, a cooperative part and a trivial part. It is known that FP converges on all these basis games. Combining two decompositions enables us to find latent relationships among games and study how a game's dynamics is influenced by each component.
 We note that, however, there will be no convergence guarantee for FP on arbitrary combinations of these basis games, since they span the whole game space, and FP has been proved to fail to converge on all  games \cite{shapley1964some}. 

The contributions of this paper are as follows, 

\begin{itemize}
    \item We prove that FP converges on any linear combination of a harmonic game (competition) and a potential game (cooperation), so long as they sum to be strategically equivalent to a zero-sum game or to an identical interest game. The conditions are polynomial-time checkable. 
   \item   We show that, utilizing a linear parameter, games lying in these two equivalent classes can be transformed from one class into another.
   \item We give a new analysis of the non-convergence of Continuous-time FP (CFP) on the classic example, the Shapley game, from the view of dynamical system and game decomposition, and provide a sufficient condition about initial conditions for it to converge on linear combinations of zero-sum games and identical interest games. 

\end{itemize}



Since many machine learning methods are built on FP, the fact that FP converges in larger game classes provides a guarantee for them to be applied to more real situations safely.
We argue that the classes of games considered in our paper are of wide interest and include many important games like load balancing games~\cite{vocking2007selfish}, cost and market sharing games with various distribution rules~\cite{gopalakrishnan2011characterizing}, and strictly competitive games~\cite{adler2009note}. 

\noindent\textbf{Related works. } 
Our analysis is mainly based on the techniques of game decomposition. To make the paper self-contained, we discussed related work on game decompositions here, and put the full related work in the Appendix. 

\citeauthor{candogan2011flows} \shortcite{candogan2011flows} and \citeauthor{hwang2020strategic} \shortcite{hwang2020strategic} studied decompositions for finite games. 
Helmholtz Decomposition \cite{balduzzi2018mechanics}, on the other hand, studied the decomposition of continuous games into potential games and Hamiltonian games.
But few works are about making use of decomposition to analyze games.
To the best of our knowledge, utilizing game decomposition techniques to analyze game dynamics is still a new area. \citeauthor{tuyls2018symmetric} \shortcite{tuyls2018symmetric} decomposed an asymmetric bimatrix game, where two players have the same number of pure actions, into two one population games, from the evolutionary game theory point of view. \citeauthor{cheung2020chaos} \shortcite{cheung2020chaos} utilized the canonical decomposition of decomposing a game into a zero-sum game and an identical interest game to study the chaotic behaviors of general-sum games under multiplicative weight dynamics. 
We study the convergence properties of fictitious play. Instead of checking which component dominates the other, we model the mixture of two basis games by linear combinations and see how the game patterns evolve when the parameter changes smoothly and continuously, which can also be seen as a linear homotopy from the view of homotopy method.

\noindent\textbf{Paper organization.} 
Section \ref{sec:backgroud} introduces necessary preliminaries.
Section \ref{sec:hodge} shows our main convergence results. The illustration of transformations between cooperation and competition is shown in Example \ref{equivalent example} and Section \ref{coopetition}. 
Section \ref{sec:shapley} analyzes the Shapley game and give a condition for CFP to converge.
Section \ref{sec:conclusion} summarizes and give future work.


\section{Background}\label{sec:backgroud}

We use bold lowercase characters to denote vectors. $\mathbf{1}_n$ and $\mathbf{0}_n$ denote $n$-dimensional all-one and all-zero vectors, respectively, $\mathbf{e}_i$ the vector the $i^{\text{th}}$ coefficient of which is 1 and all other coefficients are 0. 
$[n]$ denotes the set $\{1,\dots,n\}$. 

\subsection{Games and Nash equilibrium}
We focus on two-player bimatrix games. 
We use bold uppercase letters to denote games and uppercase letters to denote matrices: A game $\mathbf{G}$ is given in the bimatrix form $\left(A, B\right)$, 
where the first (second) matrix in the pair denotes the payoff matrix of player $1$ (player $2$, respectively). 
Both matrices have dimension $m\times n$, i.e., player $1$ (player $2$) has $m$ ($n$, respectively) actions. 

We call an action $i$ a pure strategy, and a distribution over all actions a mixed strategy. Denote the set of all mixed strategies as $\Delta_m$, where 
$\Delta_{m} \coloneqq \{\bfp \in \doubleR^m :$ $ p_{i} 
\geq 0, \forall i \in\{1, \ldots, m\}, \sum_{i=1}^{m} p_{i}=1\}$.
Given a game $\mathbf{G}=(A,B)$ and two mixed strategies $\bfp\in \Delta_m$, $\bfq \in \Delta_n$, player 1 (player 2)'s utility is $\bfp^\top A \bfq$ ($\bfp^\top B \bfq$,  respectively). We use $\BR_i(\cdot,\mathbf{G})$ to denote player $i$'s best response set: 
\begin{equation*}
    \resizebox{.999\linewidth}{!}{$
            \displaystyle
            \BR_1(\bfq, \mathbf{G})=\argmax_{i\in [m]} ( A\bfq)_i,  \BR_2(\bfp,\mathbf{G})=\argmax_{j\in[n]} (\bfp^\top B)_j. 
        $}
\end{equation*}
We omit the last variable of $\BR_i$ when there is no confusion. 

A Nash equilibrium (NE) is a pair of mixed strategies such that no one wants to deviate with the other's strategy fixed:

\begin{definition} Strategy pair $\left(\bfp^{*}, \bfq^{*}\right)$ is a Nash Equilibrium (NE) of game $\mathbf{G}=(A,B)$ if for any $\bfp \in \Delta_{m}$ and $\bfq \in \Delta_{n}$,
\begin{equation*}
\bfp^{* \top} A \bfq^{*} \geq \bfp^{\top} A \bfq^{*},~~~~\bfp^{* \top} B \bfq^{*} \geq \bfp^{* \top} B \bfq 
\end{equation*}
\end{definition}
We call an NE pure when strategies in the NE are all pure strategies, and mixed otherwise.  

Let $\mathcal{G}$ be the set of all bimatrix games. Given games $\mathbf{G}_1=\left(A_1,B_1\right)$,  $\mathbf{G}_2=\left(A_2,B_2\right) \in \mathcal{G}$, define their addition to be $\mathbf{G}_1+\mathbf{G}_2 := \left(A_1+A_2, B_1+B_2\right)$. Given a scalar $\alpha \in \doubleR$ and a game $\mathbf{G}=\left(A,B\right)$, define the scalar multiplication to be $\alpha \mathbf{G}=\left (\alpha A,\alpha B\right )$. 
Now $\mathcal{G}$ is a linear space and we can consider the combinations and decompositions of games. 

\subsection{Basic games and relations among games}\label{sub: gamede}

We introduce subspaces of $\mathcal{G}$ that are basic in this paper. 

\begin{definition}
\label{def of basic games}
	Define the following subspaces of $\mathcal{G}$
	
\noindent 1.  \textbf{Identical interest} games, $ \mathcal{I}:= \left\{ (A,B) \in \mathcal{G}: A=B \right\}$

\noindent 2.  \textbf{Zero-sum} games, $\mathcal{Z}:= \left\{ (A,B) \in \mathcal{G}:  A+B=0\right\}$

\noindent 3. \textbf{Non-strategic} games, $\mathcal{E}:=\left\{ (A,B) \in \mathcal{G}: A = \mathbf{1}_{m} \mathbf{u}^{\top}, \right.~~$ $\left.B  = \mathbf{v}\mathbf{1}^\top_{n}, \mathbf{u}\in \doubleR^n, \mathbf{v} \in \doubleR^m  \right\}
$

\noindent 4. \textbf{Normalized} games, $
		\mathcal{N}:= \{ (A,B) \in \mathcal{G}: \sum^m_{j=1} A_{ji}=0, ~\forall~i\in [n]; \sum^n_{j=1}B_{ij}=0, ~\forall~i\in[m] \}$
\end{definition}

Identical interest games and zero-sum games are important games in game theory and machine learning, especially in multi-agent learning and adversarial learning, with the former modeling team cooperation and the latter modeling competition. 
In a non-strategic game,  
a player's utility only depends on the other player's strategy, and is not affected by her own strategy at all.
Thus any strategy pair of the game is an NE. 

Noticing that adding a non-strategic game to a game does not change its original game structure, e.g., the best response structure and NEs,
one can define an equivalence relation between two games if they only differ by a non-strategic game (also called strategic equivalence by \citeauthor{hwang2020strategic} \shortcite{hwang2020strategic}), and mainly focus on the normalized games. In a normalized game, the sum of one player's utilities, with her own strategy changing and the other's fixed, equals to zero. 

Different from \citeauthor{hwang2020strategic} \shortcite{hwang2020strategic}, we call two games $\mathbf{G}$ and $\mathbf{G}^\prime$ are \textbf{additionally equivalent} if $\mathbf{G}^\prime = \mathbf{G} + \mathbf{E}$ for some $\mathbf{E}\in \mathcal{E}$. Here we introduce a more general equivalence among games.

\begin{definition}
\label{def: se}
    Game $\mathbf{G}=(A,B)$ is \textbf{strategically equivalent} to game $\mathbf{G}^\prime=(A^\prime, B^\prime)$, if there exist two positive constants $\alpha,~\beta \in \doubleR_{+}$ and a non-strategic game $\mathbf{E}\in \mathcal{E}$ such that
\begin{equation*}
(A^\prime,B^\prime)=(\alpha A,\beta B)+\mathbf{E}, ~\text { for some } \mathbf{E} \in \mathcal{E}
\end{equation*}
\end{definition}
While additional equivalence is a special case of strategic equivalence, the notion of additional equivalence is compatible with the operations of the space. Lemma \ref{lem:strategic} shows that both equivalences preserve the best response structure. 
\begin{restatable}{lemma}{lemstrategic}\label{lem:strategic}
    Given two strategically equivalent games $\mathbf{G}$ and $\mathbf{G}^\prime$, we have
    \begin{equation*}
         \BR_1(\bfq,\mathbf{G})=\BR_1(\bfq,\mathbf{G}^\prime),~~~~
        \BR_2(\bfp,\mathbf{G})=\BR_2(\bfp,\mathbf{G}^\prime)
    \end{equation*}
\end{restatable}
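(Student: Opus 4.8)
The plan is to prove the general strategic-equivalence case directly; the additional-equivalence case stated just before the lemma then follows immediately by taking $\alpha=\beta=1$. The guiding observation is that the two operations permitted by Definition~\ref{def: se}, namely multiplication by a strictly positive scalar and addition of a non-strategic game, each leave the $\argmax$ that defines a best response unchanged. So the whole proof reduces to checking how the matrix products $A'\bfq$ and $\bfp^\top B'$ transform and then invoking the invariance of $\argmax$ under increasing affine maps.

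First I would unpack strategic equivalence coordinatewise. By Definition~\ref{def of basic games}, a non-strategic $\mathbf{E}\in\mathcal{E}$ has the form $\mathbf{E}=(\mathbf{1}_m\mathbf{u}^\top,\ \mathbf{v}\mathbf{1}_n^\top)$ for some $\mathbf{u}\in\doubleR^n$ and $\mathbf{v}\in\doubleR^m$, so Definition~\ref{def: se} gives $A'=\alpha A+\mathbf{1}_m\mathbf{u}^\top$ and $B'=\beta B+\mathbf{v}\mathbf{1}_n^\top$ with $\alpha,\beta>0$. Multiplying the first on the right by $\bfq\in\Delta_n$ yields
\begin{equation*}
A'\bfq=\alpha A\bfq+(\mathbf{u}^\top\bfq)\,\mathbf{1}_m,
\end{equation*}
so that $(A'\bfq)_i=\alpha\,(A\bfq)_i+\mathbf{u}^\top\bfq$ for every $i\in[m]$. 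The crucial point is that the additive term $\mathbf{u}^\top\bfq$ is a scalar independent of the row index $i$, while $\alpha>0$ is a strictly positive common factor.

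Since the map $t\mapsto\alpha t+c$ with $\alpha>0$ is strictly increasing, it preserves the order of real numbers and hence the set of maximizers; therefore $\argmax_{i\in[m]}(A'\bfq)_i=\argmax_{i\in[m]}(A\bfq)_i$, i.e. $\BR_1(\bfq,\mathbf{G}')=\BR_1(\bfq,\mathbf{G})$. The argument for player~2 is symmetric: from $B'=\beta B+\mathbf{v}\mathbf{1}_n^\top$ we obtain $\bfp^\top B'=\beta\,\bfp^\top B+(\bfp^\top\mathbf{v})\,\mathbf{1}_n^\top$, whose $j$-th entry is $\beta\,(\bfp^\top B)_j+\bfp^\top\mathbf{v}$, again a strictly positive rescaling plus a column-independent constant, giving $\BR_2(\bfp,\mathbf{G}')=\BR_2(\bfp,\mathbf{G})$. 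I do not anticipate a genuine obstacle here: the only subtlety worth stating explicitly is that the positivity of $\alpha$ and $\beta$ is essential, since a negative scalar would reverse the order and turn the $\argmax$ into an $\argmin$, which is precisely why Definition~\ref{def: se} restricts to $\alpha,\beta\in\doubleR_+$.
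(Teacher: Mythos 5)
Your proof is correct and follows essentially the same route as the paper's: both arguments reduce to observing that the non-strategic summand contributes a term ($\mathbf{u}^\top\bfq$, resp.\ $\bfp^\top\mathbf{v}$) independent of the player's own choice, and that the positive scalars $\alpha,\beta$ preserve the ordering, so the maximizer set is unchanged. Your coordinatewise phrasing via $\argmax_{i\in[m]}$ is in fact slightly more faithful to the paper's definition of $\BR_i$ than the paper's own proof, which runs the same argument through mixed strategies.
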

We use $\mathcal{S(\cdot)}$ ($\mathcal{A(\cdot)}$) to denote the set of games that are strategically (additionally) equivalent to the games in $\cdot$. 
In particular, we call games in $\mathcal{S(Z)}$ ($\mathcal{S(I)}$) zero-sum equivalent games (identical interest equivalent games, respectively).

\subsection{Discrete-time fictitious play (DFP)}\label{sub: fp}

In fictitious play, each player regards the empirical distribution over the other player's actions as her belief towards the other player's mixed strategy, and acts myopically to maximize her utility in the next step. Specifically, let $\bfp(t)\in \Delta_m$ and $\bfq(t)\in\Delta_n$ be the beliefs of two players' strategies at time step $t$, then the sequence $(\bfp(t),\bfq(t))$ is a discrete-time fictitious play (DFP) if:
\begin{equation*}
    (\bfp(0),\bfq(0))\in \Delta_m \times \Delta_n
\end{equation*}
and for all $t$: 
\begin{equation}
\label{dfp update}
\begin{aligned}
\bfp(t+1)&\in \frac{t}{t+1}\bfp(t)+\frac{1}{t+1}\BR_1(\bfq(t)),\\
    \bfq(t+1)&\in \frac{t}{t+1}\bfq(t)+\frac{1}{t+1}\BR_2(\bfp(t))
\end{aligned}
\end{equation}
With a specific tie-breaking rule, we can regard $\BR_i(\cdot)$ as a vector and the inclusion above becomes equality.  

In the following sections, DFP is simply called FP. With abuse of notation, the term FP refers to the dynamic rules in Eqn.\ (\ref{dfp update}) or the sequences generated by the rules, according to the context. We say a game has fictitious play property (FPP) if every FP sequence of it converges. Noticing that games with the same best response structure enjoy the same FPP.

\begin{figure*}[htbp]
\centering
\begin{align*}
\begin{bmatrix} 
-14, 21 & -20, 30 & -14, 21 \\ 
18, -27 & 14, -21 & 2, -3 \\ 
-18,27  & 0, 0 & -16, 24 
\end{bmatrix}
=&
\left(2\begin{bmatrix} 
-7 & -10 & -7 \\ 
9 & 7 & 1 \\ 
-9 & 0 & -8 
\end{bmatrix},
3\begin{bmatrix} 
7 & 10 & 7 \\ 
-9 & -7 & -1 \\ 
9 & 0 & 8  
\end{bmatrix}\right)
+
\begin{bmatrix} 
0, 0 & 0, 0 & 0, 0 \\ 
0, 0 & 0, 0 & 0, 0 \\ 
0, 0 & 0, 0 & 0, 0
\end{bmatrix}
\end{align*}
\caption{Game $\mathbf{G}\in \mathcal{S(Z)}$: $\mathbf{G}=(2Z,-3Z)$, where $Z\in \doubleR^{m\times n}$. 
}
\label{form of example}
\end{figure*}

\section{Convergence on the combinations of cooperation and competition}\label{sec:hodge}

In this section, we formally present our first two main results. In Section \ref{sec:proof}, we introduce two game decompositions, and present our results on the linear combinations of basis games. In Section \ref{coopetition}, we illustrate how the players' relationships transform with respect to the linear parameter. 

\subsection{Proof of convergence by game decompositions}\label{sec:proof}
Before formally stating our first result,
we present two important game decompositions: 
\begin{theorem}[Strategic Decomposition]~\cite{hwang2020strategic}
\label{thm: strategic}
 The space of games $\mathcal{G}$ can be decomposed as: 
\begin{equation*}
    \mathcal{G} = (\mathcal{I} \cap \mathcal{N})   \oplus (\mathcal{Z} \cap \mathcal{N}) \oplus \mathcal{B}.
\end{equation*}
where $\mathcal{B} \coloneqq (\mathcal{I} + \mathcal{E}) \cap (\mathcal{Z} + \mathcal{E})$ is the set of zero-sum equivalent potential games, $\oplus$ denotes the direct sum of two linear subspaces.
\end{theorem}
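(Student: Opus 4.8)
The plan is to treat the statement as a pure linear-algebra fact about the $2mn$-dimensional space $\mathcal{G}\cong \doubleR^{m\times n}\times\doubleR^{m\times n}$ and, for every game $(A,B)$, to exhibit an explicit and manifestly unique decomposition into the three claimed summands; uniqueness of such a decomposition is exactly the assertion that the sum is direct and exhausts $\mathcal{G}$. As a guide and a sanity check I would first record the dimensions $\dim(\mathcal{I}\cap\mathcal{N})=\dim(\mathcal{Z}\cap\mathcal{N})=(m-1)(n-1)$ and $\dim\mathcal{B}=2(m+n-1)$, which sum to exactly $2mn=\dim\mathcal{G}$. Consequently it suffices to produce one valid decomposition of each game (spanning), since matching dimensions then force the sum to be direct; an explicit formula will in fact give existence and uniqueness at once.

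The engine is the classical ``centering'' split of a single matrix. Let $C=\tfrac1m\mathbf{1}_m\mathbf{1}_m^\top$ and $R=\tfrac1n\mathbf{1}_n\mathbf{1}_n^\top$ be the averaging projectors; then every $M\in\doubleR^{m\times n}$ splits uniquely as
\[
M=(I_m-C)M(I_n-R)+\bigl(CM+MR-CMR\bigr),
\]
where the first term is \emph{doubly centered} (all row and column sums vanish) and the second is \emph{additively separable}, i.e.\ of the form $\mathbf{a}\mathbf{1}_n^\top+\mathbf{1}_m\mathbf{b}^\top$. These two classes are complementary subspaces of $\doubleR^{m\times n}$, of dimensions $(m-1)(n-1)$ and $m+n-1$. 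I would then identify each summand of the theorem with matrix data: $(A,B)\in\mathcal{I}\cap\mathcal{N}$ means $A=B$ is doubly centered, $(A,B)\in\mathcal{Z}\cap\mathcal{N}$ means $A=-B$ is doubly centered, and---this is the crux---$\mathcal{B}$ consists precisely of pairs in which $A$ and $B$ are \emph{both} additively separable.

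To pin down $\mathcal{B}$ I would first show $(A,B)\in\mathcal{I}+\mathcal{E}$ iff $A-B$ is additively separable, and $(A,B)\in\mathcal{Z}+\mathcal{E}$ iff $A+B$ is additively separable; writing an $\mathcal{E}$-element as $(\mathbf{1}_m\mathbf{u}^\top,\mathbf{v}\mathbf{1}_n^\top)$ makes both directions immediate. Intersecting, $\mathcal{B}=(\mathcal{I}+\mathcal{E})\cap(\mathcal{Z}+\mathcal{E})$ forces both $A-B$ and $A+B$, hence both $A$ and $B$, to be additively separable. The assembly is then routine: given $(A,B)$, set $S=\tfrac{A+B}2$ and $D=\tfrac{A-B}2$, apply the centering split $S=S_{\mathrm{dc}}+S_{\mathrm{as}}$ and $D=D_{\mathrm{dc}}+D_{\mathrm{as}}$, and take the three components to be $(S_{\mathrm{dc}},S_{\mathrm{dc}})\in\mathcal{I}\cap\mathcal{N}$, $(D_{\mathrm{dc}},-D_{\mathrm{dc}})\in\mathcal{Z}\cap\mathcal{N}$, and $(S_{\mathrm{as}}+D_{\mathrm{as}},\,S_{\mathrm{as}}-D_{\mathrm{as}})\in\mathcal{B}$. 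Since $A=S+D$ and $B=S-D$, these sum back to $(A,B)$, and the uniqueness of the centering split shows the three components are forced, delivering directness.

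The main obstacle I anticipate is the computation of $\mathcal{B}$: unlike the other two summands it is the intersection of two subspaces that have each been enlarged by $\mathcal{E}$, so it is not visibly ``small'' and its structure cannot be read off directly from the defining constraints. The $A\pm B$ reformulation is what collapses this intersection to the clean condition ``both payoff matrices additively separable''; once that is established, everything else is bookkeeping with the centering projectors and the dimension count $2(m-1)(n-1)+2(m+n-1)=2mn$.
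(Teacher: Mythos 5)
Your argument is correct, and it is worth noting up front that the paper offers no proof of this statement at all: Theorem~\ref{thm: strategic} is imported verbatim from \cite{hwang2020strategic}, so what you have written is a self-contained replacement rather than a variant of an in-paper argument. The key reductions check out. Writing an element of $\mathcal{E}$ as $(\mathbf{1}_m\mathbf{u}^\top,\mathbf{v}\mathbf{1}_n^\top)$ does give that $(A,B)\in\mathcal{I}+\mathcal{E}$ iff $A-B$ is additively separable and $(A,B)\in\mathcal{Z}+\mathcal{E}$ iff $A+B$ is, so $\mathcal{B}$ is exactly the set of pairs with \emph{both} payoff matrices additively separable --- consistent with the explicit form the paper records in Eqn.~(\ref{form of b}). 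The centering identity $M=(I_m-C)M(I_n-R)+(CM+MR-CMR)$ is a valid splitting into a doubly centered part and an additively separable part, your three components do land in the three claimed subspaces and sum back to $(A,B)$, and the dimension count $2(m-1)(n-1)+2(m+n-1)=2mn$ combined with surjectivity of your explicit formula legitimately forces directness. Compared with the source the paper relies on, which constructs explicit projection operators onto each summand (the machinery the paper itself invokes in its Appendix when decomposing the harmonic basis games $\mathbf{G}^{ij}$), your change of variables $S=(A+B)/2$, $D=(A-B)/2$ decouples the problem into two independent single-matrix centering splits and is arguably cleaner; what the projection-operator route buys instead is ready-made formulas for computing each component of a given game, which the paper uses elsewhere. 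The only gap worth closing is a one-line verification that a doubly centered additively separable matrix must be zero (row- and column-sum constraints force $\mathbf{a}$ and $\mathbf{b}$ constant and then the matrix vanishes), since you assert rather than prove that the two single-matrix classes are complementary; with that line added the proof is complete.
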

$\mathcal{I+E}$ is the space of games additionally equivalent to identical interest games, and is actually the space of all potential games. The equivalence between this definition and the one using potential function is shown in Appendix. 
As shown by \citeauthor{hwang2020strategic} \shortcite{hwang2020strategic}, a two-player zero-sum equivalent potential game $\mathbf{B}\in \mathcal{B}$ has the form
\begin{equation}
\label{form of b}
    \displaystyle
\mathbf{B}= (\mathbf{u}\mathbf{1}^\top_n, \mathbf{1}_m\mathbf{v}^\top)+\mathbf{E}
= \left(\mathbf{u}\mathbf{1}^\top_n+\mathbf{1}_m\mathbf{x}^{\top}, \mathbf{1}_m\mathbf{v}^\top+\mathbf{y}\mathbf{1}^{\top}_{n}\right)
\end{equation} 
for some $\mathbf{u},\mathbf{y}\in \doubleR^m$, $\mathbf{v},\mathbf{x}\in \mathbb{R}^n$ and $\mathbf{E}= (\mathbf{1}_m\mathbf{x}^{\top},\mathbf{y}\mathbf{1}^{\top}_{n})\in \mathcal{E}$. That is, a player's utility is not affected by the other's strategy, which can be seen as the opposite of non-strategic games. Each player has a dominant strategy, and a pure NE exists.

\begin{theorem}[Hodge Decomposition] \cite{candogan2011flows}
\label{thm: hodge}
The space of games $\mathcal{G}$ can be decomposed as:
\begin{equation*}
\mathcal{G} = \mathcal{P} \oplus \mathcal{H} \oplus \mathcal{E}.    
\end{equation*}
where $\mathcal{P}\coloneqq\mathcal{N}\cap \mathcal{\left(I+E\right)}$ denotes normalized potential games, and $\mathcal{H}\coloneqq \left \{(A,B)\in \mathcal{N}: mA+nB=0 \right\}$ normalized harmonic games. 
$\mathcal{P+E}$ is the set of all potential games. $\mathcal{H+E}$ is the set of all harmonic games.
\end{theorem}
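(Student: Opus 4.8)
The plan is to establish the three-fold direct sum in two nested stages, which is natural because all three summands are linear subspaces and $\mathcal{P},\mathcal{H}\subseteq\mathcal{N}$. First I would prove $\mathcal{G}=\mathcal{N}\oplus\mathcal{E}$, and then refine the normalized part by showing $\mathcal{N}=\mathcal{P}\oplus\mathcal{H}$. By associativity of the direct sum, combining these yields $\mathcal{G}=\mathcal{P}\oplus\mathcal{H}\oplus\mathcal{E}$ without having to verify the three-way independence condition separately. For each two-fold stage I would use the standard recipe: check that the two subspaces meet only in $\{0\}$, and that their dimensions add up to that of the ambient space.

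For the first stage, the intersection $\mathcal{N}\cap\mathcal{E}$ is trivial: a game in $\mathcal{E}$ has $A=\mathbf{1}_m\mathbf{u}^\top$ with constant columns, whose column sums $m u_i$ vanish only when $\mathbf{u}=\mathbf{0}$, and symmetrically for $B$, so the only normalized non-strategic game is the zero game. A direct count gives $\dim\mathcal{E}=m+n$ and $\dim\mathcal{N}=(mn-n)+(mn-m)=2mn-m-n$, which sum to $\dim\mathcal{G}=2mn$; hence the sum is direct and fills $\mathcal{G}$. If an explicit projection is preferred, one subtracts column averages from $A$ and row averages from $B$ to land in $\mathcal{N}$, leaving a non-strategic remainder in $\mathcal{E}$.

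For the second stage I would first record the dimensions. Since $mA+nB=0$ forces $B=-\tfrac{m}{n}A$, membership in $\mathcal{H}$ amounts to $A$ having both zero row and zero column sums, so $\dim\mathcal{H}=(m-1)(n-1)$; and using $\mathcal{N}+\mathcal{E}=\mathcal{G}$ from the first stage together with the Grassmann formula gives $\dim\mathcal{P}=\dim(\mathcal{N}\cap(\mathcal{I}+\mathcal{E}))=mn-1$. These add up to $2mn-m-n=\dim\mathcal{N}$. The crux is then to show $\mathcal{P}\cap\mathcal{H}=\{0\}$. Here I would parametrize a normalized potential game as $(C+\mathbf{1}_m\mathbf{u}^\top,\,C+\mathbf{v}\mathbf{1}_n^\top)$, substitute into $mA+nB=0$ to deduce $(m+n)C=-m\mathbf{1}_m\mathbf{u}^\top-n\mathbf{v}\mathbf{1}_n^\top$, i.e. the common matrix is additively separable, $C_{ij}=f_i+g_j$. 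Then $A$ itself takes the separable form $\mathbf{f}\mathbf{1}_n^\top+\mathbf{1}_m\mathbf{h}^\top$, and imposing the harmonic normalization (zero row and column sums) forces $\mathbf{f}$ and $\mathbf{h}$ to be constant and mutually cancelling, whence $A=0$ and $B=0$.

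I expect this last step, the triviality of $\mathcal{P}\cap\mathcal{H}$, to be the main obstacle, since it is where the potential structure, the harmonic condition, and the normalization must be reconciled simultaneously; the dimension bookkeeping and the first-stage intersection are routine. An alternative is the combinatorial Hodge / graph-Laplacian viewpoint of \cite{candogan2011flows}, treating payoff differences as an edge flow and splitting it into gradient, harmonic, and normalization components, but for the two-player bimatrix case the elementary linear-algebra argument above is self-contained and avoids that machinery. Finally, the auxiliary claims that $\mathcal{P}+\mathcal{E}$ and $\mathcal{H}+\mathcal{E}$ are exactly the potential and harmonic games follow quickly: given any potential game, its $\mathcal{N}$-component under the first-stage splitting lies in $\mathcal{N}\cap(\mathcal{I}+\mathcal{E})=\mathcal{P}$, so $\mathcal{P}+\mathcal{E}=\mathcal{I}+\mathcal{E}$, and symmetrically for $\mathcal{H}$.
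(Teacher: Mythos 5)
Your argument is correct, but note that the paper itself offers no proof of this theorem: it is imported wholesale from \cite{candogan2011flows}, where it is established for general $n$-player finite games by viewing payoff differences as a flow on the strategy graph and applying the combinatorial Hodge (Helmholtz) decomposition into gradient, harmonic, and non-strategic components. Your route is genuinely different and, for the two-player bimatrix case the paper actually uses, arguably preferable because it is self-contained: the nested splitting $\mathcal{G}=\mathcal{N}\oplus\mathcal{E}$ followed by $\mathcal{N}=\mathcal{P}\oplus\mathcal{H}$ is sound, the dimension counts $\dim\mathcal{E}=m+n$, $\dim\mathcal{N}=2mn-m-n$, $\dim\mathcal{H}=(m-1)(n-1)$, $\dim\mathcal{P}=mn-1$ all check out, and your separability argument for $\mathcal{P}\cap\mathcal{H}=\{0\}$ (deducing $(m+n)C=-m\mathbf{1}_m\mathbf{u}^\top-n\mathbf{v}\mathbf{1}_n^\top$ and then killing the separable $A$ with the zero row- and column-sum constraints) is exactly the right reconciliation of the three structures. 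The only sub-step you leave implicit is that the Grassmann computation of $\dim\mathcal{P}$ needs $\dim(\mathcal{I}+\mathcal{E})=mn+m+n-1$, i.e.\ that $\mathcal{I}\cap\mathcal{E}$ is the one-dimensional space of constant games; this is routine but should be stated. What the cited machinery buys that your argument does not is generality beyond two players and the interpretation of $\mathcal{H}$ as the kernel of a Laplacian on the response graph; what your argument buys is a short, elementary verification that could have been placed in the paper's appendix alongside its other linear-algebraic lemmas. The closing identification $\mathcal{P}+\mathcal{E}=\mathcal{I}+\mathcal{E}$ is fine; for $\mathcal{H}+\mathcal{E}$ being ``all harmonic games'' there is nothing to prove in this paper's framework, since that is taken as the definition.
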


The definition of normalized harmonic games tells that they are like zero-sum games. Thus both decompositions show that any bimatrix game is made up of a fully cooperative component, a fully competitive component and a component that either has both features or is trivial. Combining the decompositions and equivalences makes us able to study bimatrix games from multiple angles. 

Since the basis games in the decompositions generate the whole game space, FP will not converge on all combinations of them (a famous example of non-convergence is the Shapley game \cite{shapley1964some}). Notice that on all basis games, however,
FP will converge:
results
on zero-sum games and potential games are known \cite{Monderer1996ficitious,Robinson1951iterative}.
When the tie-breaking rule is decided, best responses in FPs are always the same on games in $\mathcal{B}$ and $\mathcal{E}$ --- a dominant strategy in $\mathbf{B}\in\mathcal{B}$ and the prescribed strategy by the tie-breaking rule in $\mathbf{E}\in\mathcal{E}$ --- thus FP will converge. 
In Appendix, we provide a simple proof to show that FP also converges on harmonic games.

It is then interesting to study under what conditions do combinations of these games preserve FPP. 
By first conducting experiments on mixtures of normalized harmonic games and normalized potential games, 
we find out that if they are components of a zero-sum game, then FP converges on any linear combination of them. 
The following theorem gives the formal explanation for this phenomenon and provide a more general condition for FP to converge: if their sum is either fully competitive or cooperative, then any linear combination of these games has FPP. 
Recall that the set $\mathcal{S(\cdot)}$ is the set of games strategically equivalent to games in $~\cdot~$, then we have:
\begin{restatable}{theorem}{thmcombine}
\label{thm: combine}
For any game $\mathbf{G} \in \mathcal{S(Z)\cup S(I)}$ with Hodge Decomposition, 
	\begin{equation*}
	    \mathbf{G}=\mathbf{P}+\mathbf{H}+\mathbf{E}
	\end{equation*}
	where $\mathbf{P}\in \mathcal{P}$, $\mathbf{H}\in \mathcal{H}$, $\mathbf{E}\in\mathcal{E}$.  
Then for any $\lambda \in \mathbb{R}$, game $\lambda \mathbf{P}+(1-\lambda)\mathbf{H}$ has FPP. 
\end{restatable}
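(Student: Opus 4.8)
The plan is to reduce everything to best-response structure. Since the target game $\lambda\mathbf{P}+(1-\lambda)\mathbf{H}$ does not involve the component $\mathbf{E}$, and since by Lemma~\ref{lem:strategic} the fictitious-play property (FPP) is determined by the best-response correspondence alone — which is shared by strategically equivalent games and is unchanged by adding any member of $\mathcal{E}$ — it suffices to show that for every $\lambda\in\doubleR$ the game $\lambda\mathbf{P}+(1-\lambda)\mathbf{H}$ has the same best responses as either a zero-sum game or an identical-interest game, or else gives one player a dominant strategy. Because fictitious play converges on zero-sum games, on identical-interest (potential) games, and trivially when a player has a dominant strategy, this yields FPP for all $\lambda$. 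I treat the case $\mathbf{G}\in\mathcal{S(Z)}$; the case $\mathbf{G}\in\mathcal{S(I)}$ is symmetric.

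First I would extract the structural consequence of the hypothesis. Writing $\mathbf{G}=(\alpha Z,-\beta Z)+\mathbf{E}_0$ with $\alpha,\beta>0$ and letting $Z_0$ be the doubly-normalized part of $Z$ (both row and column sums zero), the members of $\mathcal{E}$ contribute nothing to the doubly-normalized parts of the two payoff matrices, so those parts are forced to be the \emph{proportional} matrices $\alpha Z_0$ and $-\beta Z_0$. This proportionality is exactly where the hypothesis does its work: for a generic game the two doubly-normalized parts are independent. Using the characterizations $\mathcal{P}=\{(A,B)\in\mathcal{N}:A \text{ and } B \text{ have equal doubly-normalized parts}\}$ and $\mathcal{H}=\{(W,-\tfrac{m}{n}W):W\text{ doubly-normalized}\}$, I would compute the Hodge components (Theorem~\ref{thm: hodge}) of the normalized part of $\mathbf{G}$ and find $\mathbf{H}=(\gamma Z_0,-\tfrac{m}{n}\gamma Z_0)$ for a scalar $\gamma>0$, while $\mathbf{P}$ has doubly-normalized parts that are again multiples of $Z_0$ and whose remaining rank-one pieces are \emph{own-action} terms of the form $\mathbf{b}\mathbf{1}_n^\top$ (player~1) and $\mathbf{1}_m\mathbf{a}^\top$ (player~2), i.e.\ a member of $\mathcal{B}$.

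Consequently $\lambda\mathbf{P}+(1-\lambda)\mathbf{H}=(\kappa Z_0,-\kappa' Z_0)+\mathbf{B}(\lambda)$, where $\kappa,\kappa'$ are affine in $\lambda$ and $\mathbf{B}(\lambda)\in\mathcal{B}$. The core game $(\kappa Z_0,-\kappa' Z_0)$ is strategically equivalent to the zero-sum game $(Z_0,-Z_0)$ when $\kappa\kappa'>0$, to the identical-interest game $(Z_0,Z_0)$ when $\kappa\kappa'<0$, and degenerates to a one-sided dominant-strategy game when $\kappa\kappa'=0$; since $\kappa\kappa'$ is a single quadratic in $\lambda$, exactly one of these holds at each $\lambda$. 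To finish I would absorb $\mathbf{B}(\lambda)$: on the simplex an own-action bonus can be folded into a single payoff matrix $\hat{Z}$ (resp.\ $\hat{C}$) without changing either player's $\argmax$, since when evaluated against a probability vector it contributes only a term that is constant across the deviating player's own action. Hence $\lambda\mathbf{P}+(1-\lambda)\mathbf{H}$ shares its best-response correspondence with a genuine zero-sum game $(\hat{Z},-\hat{Z})$ (resp.\ identical-interest game $(\hat{C},\hat{C})$), and FPP follows.

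The main obstacle is precisely the $\mathcal{B}$ component. Unlike a member of $\mathcal{E}$, it cannot be stripped off by strategic equivalence and it genuinely alters best responses, so the naive hope that $\lambda\mathbf{P}+(1-\lambda)\mathbf{H}$ itself lies in $\mathcal{S(Z)}\cup\mathcal{S(I)}$ fails. The crux is therefore the observation that adding own-action bonuses to a zero-sum (or identical-interest) game keeps its best-response structure inside the zero-sum (resp.\ potential) class, by the folding argument above. The two linear-algebra facts this rests on — the description of $\mathcal{P}$ via equality of doubly-normalized parts, and the explicit Hodge projection onto $\mathcal{H}$ giving $\gamma,\kappa,\kappa'$ — are routine and I would verify them first.
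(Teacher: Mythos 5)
Your proposal is correct and follows essentially the same route as the paper's proof: both reduce $\lambda\mathbf{P}+(1-\lambda)\mathbf{H}$ to a pair of proportional matrices $(\kappa Z_0,\kappa' Z_0)$ plus a dominant-strategy game in $\mathcal{B}$, then absorb that $\mathcal{B}$ component (your ``folding'' step is exactly the content of Lemma~\ref{se close zero-sum potential}) and dispose of the degenerate case $\kappa\kappa'=0$ via dominant strategies. The only cosmetic difference is that you extract the proportionality by projecting onto doubly-normalized parts, whereas the paper solves an explicit linear system expressing $I$ and $Z$ in terms of $Z'$, $E_1$, $E_2$.
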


We note that the decomposition of a game that is either strategically equivalent to a zero-sum game or to an identical interest game is non-trivial: It can have all game components, since a game lying in these two classes does not necessarily belong to any basis game class. One can refer to the decomposition of the game in Example \ref{equivalent example} in Appendix C.

\begin{figure*}[htbp]
    \begin{subfigure}{0.5\textwidth}
    \centering
    \includegraphics[width=0.67\linewidth]{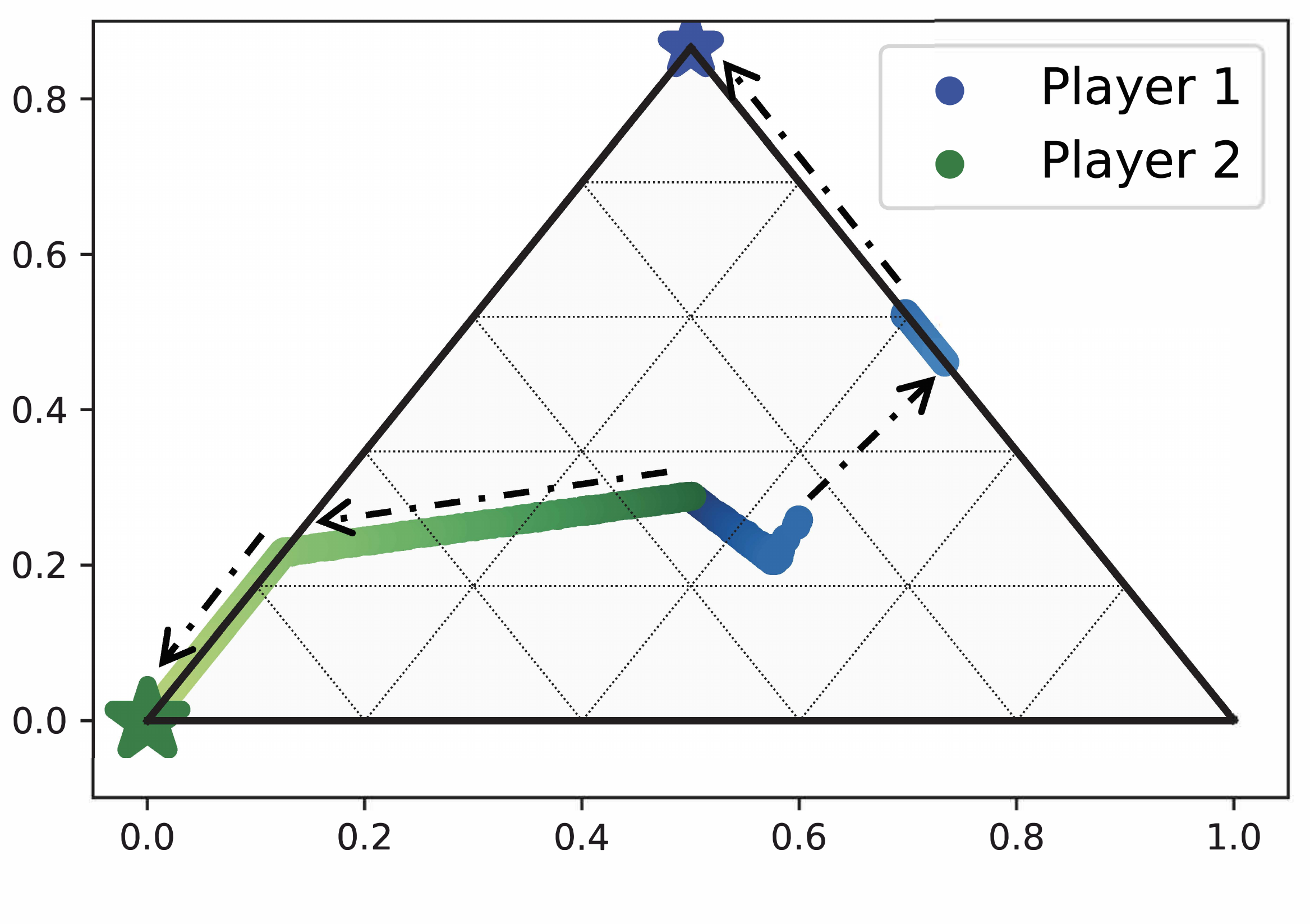}
    \caption{Equilibrium changes in one strategy simplex. }
    \label{fig:barycentric}
    \end{subfigure}
    \begin{subfigure}{.5\textwidth}
    \centering
    \includegraphics[width=0.67\linewidth]{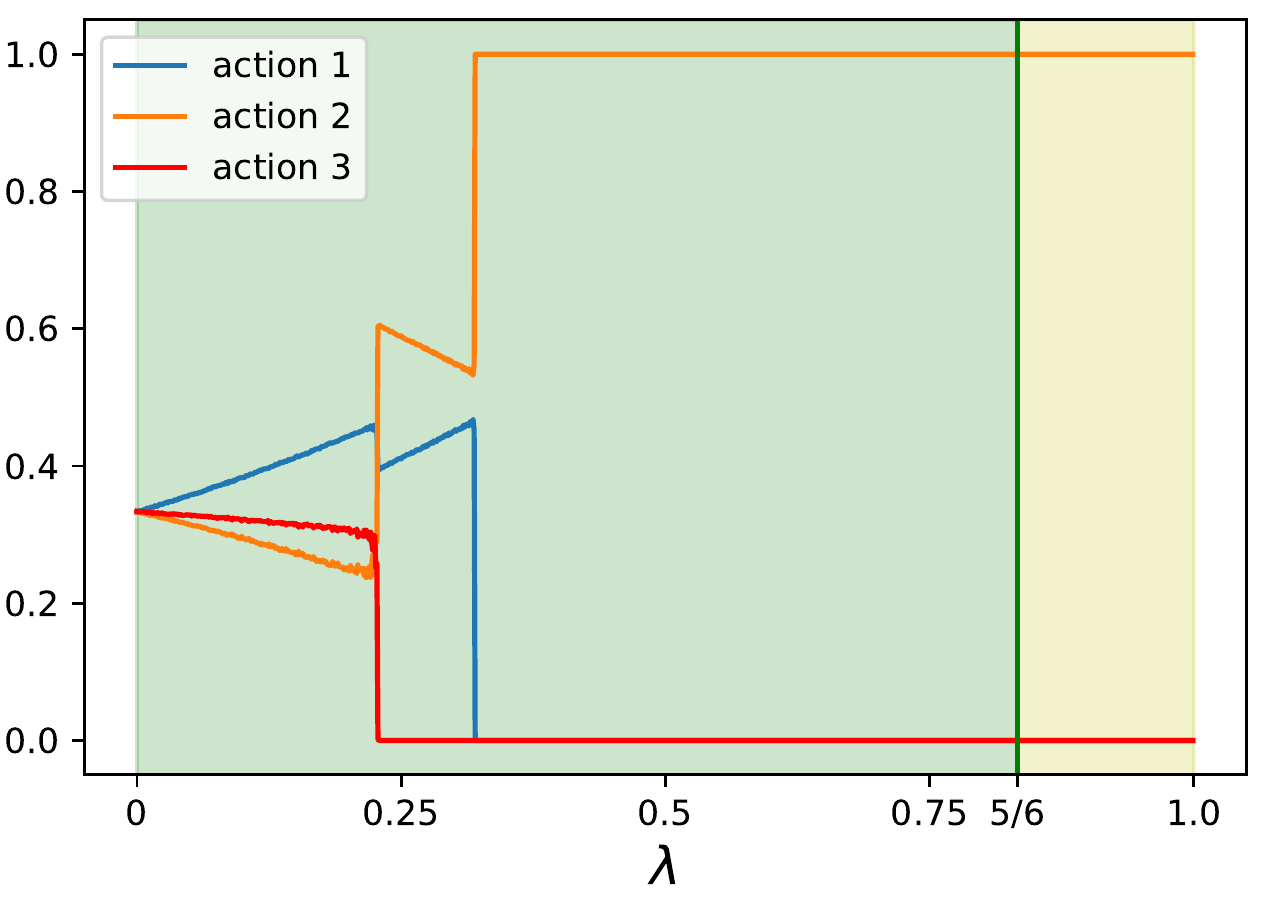}
    \caption{Equilibrium strategy changes of player 1. }
    \label{fig:linechart}
    \end{subfigure}
\caption{How Nash equilibrium of game $\lambda \mathbf{P}+(1-\lambda)\mathbf{H}$ changes as the linear parameter $\lambda$ increases from $0$ to $1$. 
(a) The changes of equilibrium strategies of both players are shown in one strategy simplex. The black arrows tell the direction of the changes. The stars denote the stopping point. When $\lambda$ is small, the harmonic part dominates. The game shows more competitive patterns, as the equilibrium strategies are mixed in the interior of the strategy simplex. As $\lambda$ gets larger, the support of equilibrium strategies shrinks. 
(b) The changes of player 1's strategy in a line chart. In the green area (when $\lambda<\frac{5}{6}$), the game is zero-sum equivalent. In the yellow area (when $\lambda>\frac{5}{6}$), the game is identical interest equivalent. When $\lambda$ exceeds $\frac{5}{6}$, the game is more cooperative, and the equilibrium stays pure.
}
\label{fig:example}
\end{figure*}

To prove Theorem \ref{thm: combine}, we first give a necessary lemma. Lemma \ref{se close zero-sum potential} states the properties of zero-sum equivalent potential games;
it shows that sets $\mathcal{S(I)}$ and $\mathcal{S(Z)}$ are closed under the operation of adding a zero-sum equivalent potential game. 



\begin{restatable}{lemma}{secloseb}
\label{se close zero-sum potential}
If $\mathbf{G} \in \mathcal{S(Z)}\cup \mathcal{S(I)}$, then $\mathbf{G}+\mathbf{B} \in \mathcal{S(Z)}\cup \mathcal{S(I)}$ for any $\mathbf{B}\in\mathcal{B}$. 
\end{restatable}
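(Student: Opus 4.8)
The plan is to reduce strategic equivalence to a purely linear-algebraic condition on the two payoff matrices, after which the closure becomes immediate. First I would unwind Definition~\ref{def of basic games} and Definition~\ref{def: se} to obtain a clean membership criterion. Writing $\mathbf{G}=(A,B)$, strategic equivalence to a zero-sum game means there are $\alpha,\beta>0$ and $\mathbf{E}=(\mathbf{1}_m\mathbf{u}^\top,\mathbf{v}\mathbf{1}_n^\top)\in\mathcal{E}$ with $(\alpha A+\mathbf{1}_m\mathbf{u}^\top)+(\beta B+\mathbf{v}\mathbf{1}_n^\top)=0$, equivalently $\alpha A+\beta B=-(\mathbf{1}_m\mathbf{u}^\top+\mathbf{v}\mathbf{1}_n^\top)$. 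Thus $\mathbf{G}\in\mathcal{S(Z)}$ iff there exist $\alpha,\beta>0$ such that $\alpha A+\beta B$ has the \emph{separable} form $\mathbf{1}_m\mathbf{a}^\top+\mathbf{b}\mathbf{1}_n^\top$ for some $\mathbf{a}\in\doubleR^n$, $\mathbf{b}\in\doubleR^m$. The identical-interest case is analogous: $\mathbf{G}\in\mathcal{S(I)}$ iff $\alpha A-\beta B$ is separable for some $\alpha,\beta>0$.

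Next I would record two elementary facts. The separable matrices form a linear subspace of $\doubleR^{m\times n}$, closed under addition and scalar multiplication. Moreover, by the explicit form in Eqn.~(\ref{form of b}), every $\mathbf{B}\in\mathcal{B}$ has \emph{both} payoff matrices separable: the entries of $\mathbf{u}\mathbf{1}_n^\top+\mathbf{1}_m\mathbf{x}^\top$ and of $\mathbf{1}_m\mathbf{v}^\top+\mathbf{y}\mathbf{1}_n^\top$ split as a row term plus a column term, so each is of the form $\mathbf{1}_m\mathbf{a}^\top+\mathbf{b}\mathbf{1}_n^\top$. This is the structural observation that makes the two equivalence classes stable under adding $\mathbf{B}$.

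The closure then follows by reusing the witnesses. Suppose $\mathbf{G}=(A,B)\in\mathcal{S(Z)}$ with $\alpha,\beta>0$ making $\alpha A+\beta B$ separable, and write $\mathbf{B}=(A_B,B_B)$. For $\mathbf{G}+\mathbf{B}=(A+A_B,B+B_B)$ I compute $\alpha(A+A_B)+\beta(B+B_B)=(\alpha A+\beta B)+(\alpha A_B+\beta B_B)$, which is a sum of separable matrices and hence separable; the \emph{same} positive constants $\alpha,\beta$ serve as witnesses, so $\mathbf{G}+\mathbf{B}\in\mathcal{S(Z)}$. The identical-interest case is symmetric, using $\alpha(A+A_B)-\beta(B+B_B)=(\alpha A-\beta B)+(\alpha A_B-\beta B_B)$. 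Since any $\mathbf{G}\in\mathcal{S(Z)}\cup\mathcal{S(I)}$ falls under one of these two cases, and each case is preserved, the union is closed; in fact each class is individually closed.

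The one point demanding care, and the only place the argument could slip, is the bookkeeping of Definition~\ref{def: se}: the asymmetric placement of $\mathbf{E}$ and the strict positivity of $\alpha,\beta$. Both are handled at no cost, because adding $\mathbf{B}$ never forces a change of scaling constants; I keep $\alpha,\beta$ fixed and merely absorb the extra separable matrix into the non-strategic term. I therefore expect the substance of the proof to be the separability reformulation of the first step together with the observation that both matrices of $\mathbf{B}\in\mathcal{B}$ are separable, rather than any delicate estimate. Once that reformulation is in place the result is a one-line linear-algebra closure.
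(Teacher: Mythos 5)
Your proof is correct and in substance the same as the paper's: both arguments rest on the explicit separable form of $\mathbf{B}\in\mathcal{B}$ given in Eqn.~(\ref{form of b}) and on absorbing the added matrices into the non-strategic term while keeping the same positive constants $\alpha,\beta$ as witnesses. Your reformulation of membership in $\mathcal{S(Z)}$ and $\mathcal{S(I)}$ as separability of $\alpha A\pm\beta B$ is a cleaner packaging of the paper's direct computation (which writes $\mathbf{G}=(\alpha Z,-\beta Z)$, factors the scalars back out, and then chains through additional equivalence), and it has the minor advantage of explicitly carrying along the non-strategic component of $\mathbf{G}$ itself, which the paper's displayed computation silently drops.
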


With Lemma \ref{se close zero-sum potential}, we can now prove Theorem \ref{thm: combine}. 
The convergence of games $\lambda \mathbf{P} + (1-\lambda)\mathbf{H}$ are actually the trade-offs between cooperation ($\mathbf{P}$) and competition ($\mathbf{H}$). 
Though the decompositions of games considered are non-trivial, by further decomposing the components using the other decomposition, we find interesting relations among components of different decompositions, which is a key technique in our proof. 

\begin{proof}[Proof sketch of Theorem \ref{thm: combine}]
 We show the proof sketch of the case when $\mathbf{G}\in \mathcal{S(Z)}$ here and the full proof is in Appendix. 
Let $\mathbf{P}=\mathbf{I}+\mathbf{E}=(I,I)+\mathbf{E}$, where $\mathbf{I} \in \mathcal{I}$, $\mathbf{E}\in \mathcal{E}$, $I\in \doubleR^{m\times n}$. $\mathbf{H}$ can be formulated as $(nZ,-mZ)$ for some $Z \in \doubleR^{m\times n}$.

When $\mathbf{G}=(Z^\prime, -\alpha Z^\prime) + \mathbf{E}^\prime \in \mathcal{S(Z)}$, where $Z^\prime \in \doubleR^{m\times n}$, $\alpha >0$, $\mathbf{E}^\prime \in \mathcal{E}$, then we have
\begin{equation*}
    (I,I)+(nZ,-mZ) = (Z^\prime, -\alpha Z^\prime) + \mathbf{E}^\prime - \mathbf{E}
\end{equation*}
By letting $\mathbf{E}^\prime - \mathbf{E}= (E_1, E_2)$, where $E_1, E_2 \in \doubleR^{m \times n}$ satisfies $E_1=\mathbf{1}_m\mathbf{u}^\top$,  $E_2=\mathbf{v}\mathbf{1}^\top_n$, for some $\mathbf{u} \in \doubleR^n$, $\mathbf{v} \in \doubleR^m$, 
we have $I$ and $Z$ represented as linear combinations of $Z^\prime$, $E_1$ and $E_2$. Thus for any $\lambda \in \doubleR$,
\begin{equation*}
    \begin{aligned}
    &\lambda \mathbf{P} + (1-\lambda) \mathbf{H} 
   = \left((a_1(\lambda)Z^\prime,b_1(\lambda)Z^\prime\right)\\ 
   +&\left(a_2(\lambda)E_1+a_3(\lambda)E_2, b_2(\lambda)E_1+b_3(\lambda)E_2\right) + \lambda\mathbf{E}
    \end{aligned}
\end{equation*}
By the definition of non-strategic games and Eqn.\ (\ref{form of b}), $\left(a_2(\lambda)E_1+a_3(\lambda)E_2, b_2(\lambda)E_1+b_3(\lambda)E_2\right) \in \mathcal{B}$ is a zero-sum equivalent potential game. When $a_1(\lambda)b_1(\lambda)\ne 0$, game  $(a_1(\lambda) Z^\prime, b_1(\lambda)Z^\prime) \in \mathcal{S(Z)}\cup \mathcal{S(I)}$. 
By Lemma \ref{se close zero-sum potential}, $\lambda \mathbf{P}+ (1-\lambda)\mathbf{H} \in \mathcal{S(I)}\cup \mathcal{S(Z)}$ and thus has FPP.  

When $a_1(\lambda)b_1(\lambda)=0$, $\lambda \mathbf{P} + (1-\lambda)\mathbf{H}$ has one payoff matrix in the form of $\mathbf{x}\mathbf{1}^\top_n+\mathbf{1}_m\mathbf{y}^{\top}$ for some $\mathbf{x} \in \doubleR^m$, $\mathbf{y}\in \doubleR^n$. The player with this kind of payoff matrix has a dominant strategy. During each time step of FP, the player will choose her dominant strategy, and the other best responds to that dominant strategy. The sequence will converge to a pure NE. 

\end{proof}
Following Theorem \ref{thm: combine}, define $\mathcal{D}$ to be a new set of games, in which one of the payoff matrices has the form of $\mathbf{x}\mathbf{1}^\top_n+\mathbf{1}_m\mathbf{y}^{\top}$ for some $\mathbf{x} \in \doubleR^m$, $\mathbf{y}\in \doubleR^n$, then: 

\begin{corollary}
\label{coro:new game d}
For any game $\mathbf{G}\in \mathcal{D}$ with Hodge Decomposition 
	\begin{equation*}
	    \mathbf{G}=\mathbf{P}+\mathbf{H}+\mathbf{E}
	\end{equation*}
	where $\mathbf{P}\in \mathcal{P}$, $\mathbf{H}\in \mathcal{H}$, $\mathbf{E}\in\mathcal{E}$. 
	Then for any $\lambda \in \mathbb{R}$, game $\lambda \mathbf{P}+(1-\lambda)\mathbf{H}$ has FPP. 
\end{corollary}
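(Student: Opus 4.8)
The plan is to follow the same strategy as the proof of Theorem \ref{thm: combine}: instead of tracking the dynamics directly, I will show that for every $\lambda$ the game $\lambda\mathbf{P}+(1-\lambda)\mathbf{H}$ falls into a class already known to enjoy FPP, namely $\mathcal{S(Z)}\cup\mathcal{S(I)}$ together with the dominant-strategy games $\mathcal{D}$. Recall that any game in $\mathcal{S(Z)}\cup\mathcal{S(I)}$ shares its best-response structure with a zero-sum or an identical-interest game by Lemma \ref{lem:strategic}, hence converges; and any game in $\mathcal{D}$ converges by the dominant-strategy argument used in the last case of the proof of Theorem \ref{thm: combine}. So it suffices to prove the inclusion $\lambda\mathbf{P}+(1-\lambda)\mathbf{H}\in\mathcal{S(Z)}\cup\mathcal{S(I)}\cup\mathcal{D}$ for all $\lambda\in\mathbb{R}$.

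First I would pass to a normalized picture. Writing $\mathbf{G}=\mathbf{N}+\mathbf{E}$ with $\mathbf{N}=(N_1,N_2)\in\mathcal{N}$, the hypothesis $\mathbf{G}\in\mathcal{D}$ says (without loss of generality, by the symmetry between the two players) that player $1$'s matrix is additive, and after column-normalization this forces $N_1$ to be column-constant, $N_1=\mathbf{s}\mathbf{1}_n^\top$. The crucial structural consequence is that the double-centering of $N_1$ (subtract row and column means, add back the grand mean) then vanishes. I would next make explicit the Hodge projection on $\mathcal{N}$: the normalized harmonic component is $\mathbf{H}=(H_1,-\tfrac{m}{n}H_1)$ with $H_1$ proportional to the doubly-centered difference of $N_1$ and $N_2$, and $\mathbf{P}=(N_1-H_1,\,N_2+\tfrac{m}{n}H_1)$. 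Because the double-centering of $N_1$ is zero, $H_1$ becomes proportional to the double-centering of $N_2$ \emph{alone}, so the genuinely strategic (non-additive) content of each of the four matrices $P_1,P_2,H_1,H_2$ is a scalar multiple of the single matrix $N_2$.

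Substituting into $\lambda\mathbf{P}+(1-\lambda)\mathbf{H}$ and collecting terms, I expect each player's matrix to split as a scalar times $N_2$ plus a purely additive matrix, i.e.
\[
\lambda\mathbf{P}+(1-\lambda)\mathbf{H}=(\nu(\lambda)N_2,\ \mu(\lambda)N_2)+\mathbf{B}',
\]
where $\nu(\lambda),\mu(\lambda)$ are affine in $\lambda$ and $\mathbf{B}'\in\mathcal{B}$ is a zero-sum-equivalent potential game (player $1$'s additive leftover has the form $\mathbf{u}\mathbf{1}_n^\top+\mathbf{1}_m\mathbf{x}^\top$ and player $2$'s the form $\mathbf{1}_m\mathbf{v}^\top$, matching Eqn.\ (\ref{form of b})). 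Since both strategic matrices are proportional to the \emph{same} matrix $N_2$, the game $(\nu(\lambda)N_2,\mu(\lambda)N_2)$ is strategically equivalent to a zero-sum game when $\nu(\lambda)\mu(\lambda)<0$ and to an identical-interest game when $\nu(\lambda)\mu(\lambda)>0$; either way it lies in $\mathcal{S(Z)}\cup\mathcal{S(I)}$, and Lemma \ref{se close zero-sum potential} absorbs the extra $\mathbf{B}'\in\mathcal{B}$ without leaving that class. In the degenerate case $\nu(\lambda)\mu(\lambda)=0$ one of the two matrices is purely additive, so that player has a dominant strategy, $\lambda\mathbf{P}+(1-\lambda)\mathbf{H}\in\mathcal{D}$, and FP converges exactly as in Theorem \ref{thm: combine}.

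I expect the main obstacle to be the middle step: establishing the explicit form of the Hodge projection on $\mathcal{N}$ and verifying cleanly that the vanishing of the double-centering of $N_1$ collapses the strategic content of \emph{both} components onto $N_2$. One must also check that the additive remainder is genuinely a member of $\mathcal{B}$, so that Lemma \ref{se close zero-sum potential} is applicable, and handle the finitely many values of $\lambda$ at which $\nu$ or $\mu$ vanishes, where the argument switches from strategic equivalence to the dominant-strategy case. Once this collapse is in hand, the conclusion is identical to that of Theorem \ref{thm: combine}.
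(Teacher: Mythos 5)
Your proposal is correct and takes essentially the same route as the paper, which states this corollary without a separate proof as a direct byproduct of the argument for Theorem \ref{thm: combine}: your explicit Hodge-projection computation (with $\hat{N}_1=0$ forcing the strategic content of both $\mathbf{P}$ and $\mathbf{H}$ to collapse onto scalar multiples of the doubly-centered $N_2$, leaving an additive remainder in $\mathcal{B}$) correctly supplies the details that make that derivation go through. The final reduction to $\mathcal{S(Z)}\cup\mathcal{S(I)}$ via Lemma \ref{se close zero-sum potential}, together with the dominant-strategy treatment of the degenerate case $\nu(\lambda)\mu(\lambda)=0$, mirrors the structure of the paper's proof of Theorem \ref{thm: combine} exactly.
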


Conditions in Theorem \ref{thm: combine} and Corollary \ref{coro:new game d} can be checked in linear time, e.g. through the method by \citeauthor{MR4051409} \shortcite{MR4051409}.

\subsection{Transformations from cooperation to competition}
\label{coopetition}

We use one example to show the non-trivial decompositions of games considered in Section \ref{sec:proof} and how the proportions of the cooperative component and the competitive component influence the game, when $\lambda \in [0,1]$ ranges from $0$ to $1$. 

\begin{example}\label{equivalent example}
A game in $\mathcal{S(Z)}$ is shown in Figure \ref{form of example}. 
We decompose this game using Hodge Decomposition into three components $\mathbf{P}\in\mathcal{P}$, $\mathbf{H}\in\mathcal{H}$ and $\mathbf{E}\in \mathcal{E}$, the details of which are in Appendix C,  and compute the game $\mathbf{G}(\lambda)=\lambda \mathbf{P} + (1-\lambda)\mathbf{H}$. The changes of NE which FP converges to with $\lambda$ changing from $0$ to $1$ is shown in Figure \ref{fig:example}. In Figure \ref{fig:barycentric}, the strategy trajectories are presented in the mixed strategy simplex: each vertex of the triangle is an action $\mathbf{v}_i$. 
We draw mixed strategy $\bfp =(p_1,p_2,p_3)\in \Delta_3$ on the convex combination of the vertices, $\sum^3_{i=1}p_i\mathbf{v}_i$. 
The green dots represent the strategy trajectory of player 1, and the blue dots that of player 2. The star sign denotes the strategies when $\lambda=1$. We set the step length to be $0.001$. For each $\lambda$, we run FP starting from $\left(\mathbf{v}_1,\mathbf{v}_1\right)$ for 500,000 rounds. The strategy trajectories of both players start at the center of a strategy simplex, i.e.\ the uniform equilibrium of harmonic games \cite{candogan2011flows}. As $\lambda$ increases, the strategies move towards the boundaries, and their supports become small. When $\lambda$ is large enough, they reach a pure NE (nodes with stars), which is typical of potential games, and no longer moves. Figure \ref{fig:linechart} shows the changes of player 1's strategies in line chart.

We utilize the algorithm by \citeauthor{MR4051409} \shortcite{MR4051409} to decide when will $\mathbf{G(\lambda)}\in \mathcal{S(Z)}$.
When $\lambda < \frac{5}{6}$, The competitive part takes over, $\mathbf{G}(\lambda) \in \mathcal{S(Z)}$. When $\lambda >\frac{5}{6}$, $\mathcal{G(\lambda)}\in\mathcal{S(I)}$. When $\lambda = \frac{5}{6}$, $\mathbf{G}(\lambda)\not\in \mathcal{S(Z)\cup S(I)}$ but belongs to $\mathcal{D}$ instead. We draw the point $\lambda = \frac{5}{6}$ on Figure \ref{fig:linechart}. One can find out that when $\lambda$ is less than but close to $\frac{5}{6}$, the equilibrium for FP to converge to already becomes pure and never moves even when $\lambda$ exceeds $\frac{5}{6}$. 


\end{example}

When $\lambda$s in the above examples reach the threshold, the games enter $\mathcal{D}$ instead of $\mathcal{S(Z)}\cap\mathcal{S(I)}$. We argue that games in $\mathcal{D}$ show the same dynamic patterns with games in $\mathcal{S(Z)\cap S(I)}$. Specifically, we show that $\mathcal{S(Z)\cap S(I)}=\mathcal{A(Z)\cap A(I)}=\mathcal{B}$, and players in games belonging to $\mathcal{B}$ have dominant strategies. In both classes of games, the players will choose a determined strategy and the dynamics converge to the corresponding pure NE directly. 

\begin{restatable}{proposition}{sameintersection}
\label{same intersection}
$
\mathcal{S(Z)}\cap\mathcal{S(I)}=\mathcal{A(Z)}\cap \mathcal{A(I)} = \mathcal{B}. 
$
\end{restatable}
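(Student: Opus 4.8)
The plan is to prove both equalities through a short cycle of inclusions, reducing everything to a single structural invariant: whether a payoff matrix is \emph{additively separable}, i.e.\ of the form $\mathbf{1}_m\mathbf{a}^\top+\mathbf{b}\mathbf{1}_n^\top$. First I would dispose of the rightmost equality almost definitionally. Since a game is additionally equivalent to a member of a subspace exactly when it differs from one by an element of $\mathcal{E}$, we have $\mathcal{A(Z)}=\mathcal{Z}+\mathcal{E}$ and $\mathcal{A(I)}=\mathcal{I}+\mathcal{E}$, so that $\mathcal{A(Z)}\cap\mathcal{A(I)}=(\mathcal{Z}+\mathcal{E})\cap(\mathcal{I}+\mathcal{E})$, which is precisely the definition of $\mathcal{B}$. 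The easy half of the remaining equality is $\mathcal{B}\subseteq\mathcal{S(Z)}\cap\mathcal{S(I)}$: additional equivalence is the special case $\alpha=\beta=1$ of strategic equivalence, so $\mathcal{A(Z)}\subseteq\mathcal{S(Z)}$ and $\mathcal{A(I)}\subseteq\mathcal{S(I)}$, whence $\mathcal{B}=\mathcal{A(Z)}\cap\mathcal{A(I)}\subseteq\mathcal{S(Z)}\cap\mathcal{S(I)}$.

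All the substance lies in the reverse inclusion $\mathcal{S(Z)}\cap\mathcal{S(I)}\subseteq\mathcal{B}$, and the key step is to translate membership in $\mathcal{S(Z)}$ and $\mathcal{S(I)}$ into separability statements. Unpacking Definition~\ref{def: se}: if $\mathbf{G}=(A,B)\in\mathcal{S(Z)}$ then there are $\alpha,\beta>0$ and $\mathbf{E}=(\mathbf{1}_m\mathbf{u}^\top,\mathbf{v}\mathbf{1}_n^\top)\in\mathcal{E}$ with $(\alpha A,\beta B)+\mathbf{E}$ equal to a zero-sum game $(C,-C)$. Adding the two coordinates cancels $C$ and leaves $\alpha A+\beta B=-\mathbf{1}_m\mathbf{u}^\top-\mathbf{v}\mathbf{1}_n^\top$, which is additively separable. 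Symmetrically, $\mathbf{G}\in\mathcal{S(I)}$ forces $\gamma A-\delta B$ to be additively separable for some $\gamma,\delta>0$, where now subtracting the coordinates cancels the common identical-interest matrix.

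Next I would exploit that the additively separable matrices form a linear subspace. Given that $\alpha A+\beta B$ and $\gamma A-\delta B$ both lie in it, the positive combinations $\delta(\alpha A+\beta B)+\beta(\gamma A-\delta B)=(\alpha\delta+\beta\gamma)A$ and $\gamma(\alpha A+\beta B)-\alpha(\gamma A-\delta B)=(\beta\gamma+\alpha\delta)B$ are separable, and since $\alpha\delta+\beta\gamma>0$ this shows both $A$ and $B$ are themselves additively separable. Writing $A=\mathbf{u}\mathbf{1}_n^\top+\mathbf{1}_m\mathbf{x}^\top$ and $B=\mathbf{1}_m\mathbf{v}^\top+\mathbf{y}\mathbf{1}_n^\top$ matches $\mathbf{G}$ to the canonical form~(\ref{form of b}); equivalently, one checks directly that $A-B$ and $A+B$ are then separable, so $\mathbf{G}$ lies in both $\mathcal{I}+\mathcal{E}$ and $\mathcal{Z}+\mathcal{E}$, i.e.\ in $\mathcal{B}$. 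This closes the cycle $\mathcal{B}\subseteq\mathcal{S(Z)}\cap\mathcal{S(I)}\subseteq\mathcal{B}$ and $\mathcal{B}=\mathcal{A(Z)}\cap\mathcal{A(I)}$, giving both equalities.

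I expect the main obstacle to be purely in the bookkeeping of the translation step: one must unpack strategic equivalence in the correct direction and track which non-strategic terms ($\mathbf{1}_m\mathbf{u}^\top$ versus $\mathbf{v}\mathbf{1}_n^\top$) land in which coordinate, since it is the cancellation of the shared matrix $C$ (resp.\ the identical-interest matrix) under the $+$ (resp.\ $-$) combination that produces separability, together with the need to preserve positivity of the scalars when isolating $A$ and $B$. A secondary point worth recording is the converse of the characterization invoked at the end, namely that every game whose two payoff matrices are both additively separable indeed lies in $\mathcal{B}$; this follows from the separability of $A+B$ and $A-B$, and once it is in hand the inclusion $\mathcal{S(Z)}\cap\mathcal{S(I)}\subseteq\mathcal{B}$ is immediate.
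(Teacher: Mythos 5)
Your proposal is correct and follows essentially the same route as the paper's proof: both reduce the hard inclusion $\mathcal{S(Z)}\cap\mathcal{S(I)}\subseteq\mathcal{B}$ to solving a $2\times 2$ linear system whose invertibility is guaranteed by positivity of the scalars, concluding that both payoff matrices are additively separable and hence that the game has the form~(\ref{form of b}). The only cosmetic difference is that you eliminate directly to $A$ and $B$, whereas the paper solves for the representatives $Z$ and $I$ in terms of the non-strategic differences; you also spell out the two easy inclusions, which the paper treats as immediate.
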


The process of transforming from zero-sum equivalent games to identical interest equivalent games with the changes of a linear parameter can be seen as an instinct feature of these two games, and a bridge linking non-cooperative game-theoretic view of cooperation and competition. Using a linear parameter that changes with time, we may be able to model how the relationships of players evolve in real world. Furthermore, when $\lambda$ is small enough, we can regard the potential part in $\lambda \mathbf{P} + (1-\lambda) \mathbf{H}$ as a small perturbation added up to a harmonic game. Our result shows that when the perturbation satisfies certain conditions, the game with perturbations can still be regarded as a zero-sum equivalent game.


\section{Analysis on the non-convergence example}\label{sec:shapley}

In this section, we use game decompositions to give a new analysis of the classic non-convergent example of FP, the Shapley game~\cite{shapley1964some}.
We consider Continuous-time Fictitious Play (CFP), a useful tool to give insights into DFP's dynamics.  we present the Shapley game and introduce a novel function, the best response utility function. We give two results about initial conditions based on this function. 

\begin{figure*}[!htbp]

\centering

\begin{align*}
\underbrace{\begin{bmatrix} 
0, 0 & 2, 1 & 1, 2 \\ 
1, 2 & 0, 0 & 2, 1 \\ 
2,1  & 1, 2 & 0, 0 
\end{bmatrix}}_{\text{the Shapley game}}
=
\underbrace{\begin{bmatrix} 
-1, -1 & 0.5, 0.5 & 0.5, 0.5 \\ 
0.5, 0.5 & -1, -1 & 0.5, 0.5 \\ 
0.5, 0.5 & 0.5, 0.5 & -1, -1 
\end{bmatrix}}_{\substack{\text{Potential game}\\\text{Normalized identical interest game}}}
+
\underbrace{\begin{bmatrix} 
0, 0 & 0.5, -0.5 & -0.5, 0.5 \\ 
-0.5, 0.5 & 0, 0 & 0.5, -0.5 \\ 
0.5, -0.5 & -0.5, 0.5 & 0, 0
\end{bmatrix}}_{\substack{\text{Harmonic game}\\ \text{Normalized zero-sum game}}}
+
\underbrace{\begin{bmatrix} 
1, 1 & 1, 1 & 1, 1 \\ 
1, 1 & 1, 1 & 1, 1 \\ 
1, 1 & 1, 1 & 1, 1 
\end{bmatrix}}_{\substack{\text{Non-strategic game}\\\text{Zero-sum equivalent potential game}}}
\end{align*}
\caption{
Hodge Decomposition and Strategic Decomposition of the Shapley game. It is a coincident that they are the same.
}\label{eqn:shapley}
\end{figure*}

\subsection{Continuous-time fictitious play (CFP)}
\label{cfp}

DFP can be regarded as an update procedure for players' empirical beliefs, where the update rate is $1$.
Now consider the corresponding continuous version, by rescaling the rate to $\delta>0$ and letting $\delta\rightarrow 0$.
This equivalently defines the derivatives of the sequence $(\bfp(t),\bfq(t))$ with respect to $t$ : \footnote{One can further rescale the variable $t$ to $t'=\ln{t}$ (for $t>0$) to eliminate the the denominator.}
\begin{equation}\label{def:cfp}
\resizebox{.91\linewidth}{!}{$
    \displaystyle
	\dot{\bfp}(t) = \frac{\BR_1(\bfq(t)) - \bfp(t) }{t},  \dot{\bfq}(t) = \frac{\BR_2(\bfp(t)) - \bfq(t)}{t}.$
}
\end{equation}
The detailed derivation from DFP to CFP is in Appendix D. 

For CFP, we define another stable state following the studies on dynamical systems.


\begin{definition}[Cycle]\label{def:cycle}
	A CFP follows a cycle $\mathrm{C}$ if there is an integer $K>0$ and a sequence of $K$ pairs of pure strategies:
	\begin{equation*}
	\mathrm{C} = \left\{ \left(i_1, j_1\right), \dots, \left(i_K, j_K\right) \right\} 
	\end{equation*}
s.t., $\exists T>0$, $\forall t>T$, $(\BR_1(\bfq(t)),\BR_2(\bfp(t)))$ of the play takes the values $\left(i_1, j_1\right), \dots, \left(i_K, j_K\right)$ periodically. 
\end{definition}


It is a special case that a CFP converges to an NE when it follows a cycle.
So far, all the known convergence results of CFP and DFP are the same. 
Intuitively, when discrete time step tends to $+\infty$, so as the denominator of Eqn.\ (\ref{dfp update}), the changes of $(\bfp(t),\bfq(t))$ at each time step become infinitesimal, and will resemble the derivatives defined by Eqn.\ (\ref{def:cfp}). 
Thus, CFP can provide useful insights into original DFP, though the relationships of two dynamics are still not clear.

Now we study the conditions that a CFP with some initial conditions converges to an NE.

\subsection{The Shapley game}
\label{shapley game}
The Shapley game, proposed by \citeauthor{shapley1964some} \shortcite{shapley1964some}, 
is a counter-example of FP's convergence: 
except under 
certain
initial conditions, 
FP does not converge to NE. 
Figure \ref{eqn:shapley} shows its payoff matrices and the two decomposition results.

When two players' initial strategies are different, e.g.\ 
$(1, 2)$, FP follows cycle
$\mathrm{C}_1 = \{(1, 2), (1, 3), (2, 3),(2, 1),  $ $(3, 1), (3, 2) \}$, but does not converge: 
both players' strategies will change periodically but never reach any fixed point.
When their initial strategies are the same, e.g.\ $(1, 1)$, FP follows another cycle
$\mathrm{C}_2 = \{(2, 2), (1, 1), (3, 3)\}$,
and tends to an NE $(\bfp,\bfq)$, where $\bfp = \bfq = (\frac{1}{3}, \frac{1}{3}, \frac{1}{3})$.

\subsection{The best response utility function}
\label{bruf}
We further consider more general combinations of the decomposed components $\mathbf{G}(\lambda)=\lambda \textbf{P} + (1-\lambda)\textbf{H}=\lambda \textbf{I} + (1-\lambda)\textbf{Z}$. 
Define the best response utility (BRU) function for game $\mathbf{G}=(A,B)$ and players' strategies $(\bfp,\bfq)$ to be
\begin{equation*}
    U(\bfp,\bfq, \mathbf{G}) = \max_{i\in[m]} (A\bfq)_i +\max_{j\in [n]}(\bfp B)_j.
\end{equation*}
It is the sum of maximal utilities each player can get by her best response to the opponent's strategy. We omit the last variable of $U$ when there is no confusion. Experimental analysis (in Appendix D) on the Shapley game shows that whether DFP converges to NE or not, $U$ always converges.

For the empirical frequencies $(\bfp(t),\bfq(t))$ ($t\geq 1$) obtained by CFP, consider the derivative of $U(\bfp(t),\bfq(t))$ ($U(t)$ for short) with respect to $t$, for almost all $t$,
then for any fixed $t_0>0$, we have
\begin{equation}
\label{eqn:solve_u}
U(t) = \frac{t_0U(t_0)}{t} +\frac{1}{t} \int^t_{t_0}  G_{i(\tau),j(\tau)}\mathrm{d}\tau,
\end{equation}
where $G=A+B$, $i(\tau)$ and $j(\tau)$ are short for the index of $\BR_1(\bfq(\tau))$ and $\BR_2(\bfp(\tau))$. The detailed derivation is in Appendix. 
This leads to a sufficient condition for the convergence of BRU.
\begin{restatable}{lemma}{lembruconverge}

\label{lem:bru_converge}
For a game $\mathbf{G}=(A,B)$, let $G=A+B$. Assume CFP follows a cycle $\{\left(i_1, j_1\right), \dots, \left(i_K, j_K\right)\}$.
If $G_{i_1, j_1} = \dots = G_{i_K, j_K}$, then BRU converges.
\end{restatable}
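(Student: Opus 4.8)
The plan is to extract the conclusion directly from the closed-form expression for the best response utility given in Eqn.~(\ref{eqn:solve_u}). Write $c \coloneqq G_{i_1,j_1} = \dots = G_{i_K,j_K}$ for the common value. The first step is to observe that, by the definition of a cycle (Definition~\ref{def:cycle}), there is a time $T>0$ such that for every $\tau > T$ the best response pair $(\BR_1(\bfq(\tau)),\BR_2(\bfp(\tau)))$ equals one of the $K$ pure pairs $(i_1,j_1),\dots,(i_K,j_K)$. Hence the index map $\tau \mapsto (i(\tau),j(\tau))$ is piecewise constant on $(T,\infty)$ and takes values only in the cycle, so the integrand $G_{i(\tau),j(\tau)}$ appearing in Eqn.~(\ref{eqn:solve_u}) is constant and equal to $c$ for almost all $\tau > T$.

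Next I would fix some $t_0 > T$ and apply Eqn.~(\ref{eqn:solve_u}) with this $t_0$. Because the integrand is identically $c$ on $(t_0,t)$, the integral evaluates to $c(t-t_0)$, giving
\[
U(t) = \frac{t_0 U(t_0)}{t} + \frac{c(t-t_0)}{t}
\]
for all $t > t_0$. Since $U(t_0)$ is finite, being a sum of two maxima of bounded linear functions, the first term tends to $0$, while the second equals $c\bigl(1 - t_0/t\bigr) \to c$. Letting $t \to \infty$ therefore yields $U(t) \to c = G_{i_1,j_1}$, which is exactly the convergence of BRU claimed.

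The proof has essentially no hard computational step once Eqn.~(\ref{eqn:solve_u}) is in hand; the only points needing care are qualitative. First, one must check that the representation in Eqn.~(\ref{eqn:solve_u}) applies here: the best response indices switch only at a discrete set of times (the boundaries between successive cycle phases), so $G_{i(\tau),j(\tau)}$ is piecewise constant and the ``almost all $t$'' caveat under which the derivative of $U$ is taken causes no trouble. Second, and this is the conceptual crux, one must choose $t_0$ strictly after the onset time $T$ so that the integrand is \emph{exactly} $c$ on the whole interval of integration rather than merely eventually. The real content of the lemma lives in the hypothesis $G_{i_1,j_1} = \dots = G_{i_K,j_K}$: it makes the time-average of $G_{i(\tau),j(\tau)}$ over the cycle equal to $c$ regardless of how long the play dwells at each pure pair, so we never need to track the dwell times within the cycle. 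Were the values of $G$ along the cycle unequal, the limit would instead be a dwell-time-weighted average, and controlling it would be the genuine obstacle; the equal-value assumption is precisely what removes that difficulty.
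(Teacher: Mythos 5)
Your proposal is correct and follows essentially the same route as the paper: both start from Eqn.~(\ref{eqn:solve_u}), note that the first term $\tfrac{t_0U(t_0)}{t}$ vanishes, and use the cycle structure together with the equal-value hypothesis to control the integral term. Your version is in fact slightly more careful than the paper's (which just says the integrand ``takes the values $G_{i_1,j_1},\dots,G_{i_K,j_K}$ periodically''), since you explicitly evaluate the integral as $c(t-t_0)$ and take the limit.
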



Back to the Shapley game, on cycle $\mathrm{C_2}$, 
$G_{i_1, j_1} =\dots = G_{i_3,j_3} = \min{G_{ij}}$. 
To be compared, on cycle $\mathrm{C_1}$,  $G_{i_1, j_1}=\dots = G_{i_6,j_6} > \min{G_{ij}}$. 
This means BRU converges but not to the minimal possible value, which coincides with the fact that the strategies fail to converge.

One can further find out that when a CFP starts with the same strategies for both players, i.e., when it converges to NE, by the symmetry of the game, $\mathbf{p}(t)=\mathbf{q}(t)$ for all $t$. Then for all strategy pairs $(\mathbf{p}(t),\mathbf{q}(t))$ related to $\mathrm{C}_2$, $\dot{U}$ satisfies
\begin{equation*}
\dot{U} = 2 \mathbf{e}(t) I \mathbf{e}(t) - (\bfp^\top B \mathbf{e}(t) + \mathbf{e}(t) A \bfq) < 0, \end{equation*}
where $I$ is the payoff matrix of the Shapley game's normalized identical interest component, and $\mathbf{e}(t) = \BR_1(\bfq(t))=\BR_2(\bfp(t))$. $U(t)$ will keep decreasing till it tends to NE. 

Now we consider the Shapley game as a linear combination of a normalized identical interest game and a normalized zero-sum game and we state the reason formally in Theorem \ref{thm:lyapnuov} why it converges under such initial conditions. Given a cycle $\mathrm{C}$ of game $\mathbf{G}$ such that there exist CFPs tending to $\mathrm{C}$, let $P(\mathrm{C},\mathbf{G})$ be all the mixed strategy pairs that lie on the paths following $\mathrm{C}$, and $\operatorname{Conv}(\mathrm{S})$ be the closed convex hull of set $\mathrm{S}$. Denote the set of $\mathbf{G}$'s NEs as $\mathcal{X}(\mathbf{G})$. 
We have

\begin{restatable}{theorem}{thmlyapunov}
\label{thm:lyapnuov}
For game $\mathbf{G}(\lambda)=\lambda \mathbf{I} + \mathbf{Z}$ for some $\mathbf{I}\in (\mathcal{I} \cap \mathcal{N})$ and $\mathbf{G}\in (\mathcal{Z} \cap \mathcal{N})$, $\lambda \in \mathbb{R}$. If from an initial point, CFP on $\mathbf{G}(\lambda)$ enters the same cylce $\mathrm{C}$ of $\mathbf{G}(0)$, 
where $\mathrm{C}$ is a cycle that a convergent CFP on $\mathbf{G}(0)$ will tend to, and for all $(\bfp,\bfq)\in \operatorname{Conv}(P(\mathrm{C},\mathbf{G}))\backslash\mathcal{X}(\mathbf{G}(0))$, $\dot{U}(\bfp, \bfq) < 0$, 
then CFPs on $\mathbf{G}(\lambda)$ which tend to $\mathrm{C}$ will converge to an NE. 

\end{restatable}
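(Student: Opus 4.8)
The plan is to treat the best-response utility $U$ as a Lyapunov function for the CFP dynamics once the play has locked onto the cycle $\mathrm{C}$. First I would remove the explicit time dependence by reparametrizing $t'=\ln t$, under which Eqn.~(\ref{def:cfp}) becomes the autonomous system $\tfrac{\mathrm{d}\bfp}{\mathrm{d}t'}=\BR_1(\bfq)-\bfp$, $\tfrac{\mathrm{d}\bfq}{\mathrm{d}t'}=\BR_2(\bfp)-\bfq$; the reparametrized horizon $t'\in[\,\cdot\,,\infty)$ stays infinite because $\int^{\infty}\mathrm{d}t/t=\infty$, which is what will let a monotone $U$ actually drag the state to equilibrium rather than merely bounding it. Writing $\mathbf{G}(\lambda)=(\lambda I+Z,\lambda I-Z)$ with $I,Z$ normalized, so that $G:=A_\lambda+B_\lambda=2\lambda I$, the derivation behind Eqn.~(\ref{eqn:solve_u}) gives, for almost every $t'$,
\begin{equation*}
\frac{\mathrm{d}U}{\mathrm{d}t'} = G_{i(t'),j(t')} - U = 2\lambda\,\bfe_{i}^\top I\,\bfe_{j} - U(\bfp,\bfq),
\end{equation*}
where $(i,j)$ are the current best-response indices. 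Hence $\dot U$ is a function of the state alone, its sign is exactly the quantity constrained by the hypothesis, and $U$ is continuous, piecewise-$C^1$, and bounded on the compact set $\Delta_m\times\Delta_n$.

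Next I would use the hypothesis that CFP on $\mathbf{G}(\lambda)$ enters $\mathrm{C}$ to confine the trajectory. There is a time $T$ past which the best-response pair $(\BR_1(\bfq(t)),\BR_2(\bfp(t)))$ runs periodically through $\mathrm{C}=\{(i_1,j_1),\dots,(i_K,j_K)\}$, so for $t>T$ the state $(\bfp(t),\bfq(t))$ lies on a path following $\mathrm{C}$, i.e.\ in $P(\mathrm{C},\mathbf{G})\subseteq\operatorname{Conv}(P(\mathrm{C},\mathbf{G}))$, a compact convex set. On this set the hypothesis grants $\dot U(\bfp,\bfq)<0$ for every $(\bfp,\bfq)\notin\mathcal{X}(\mathbf{G}(0))$, so $U$ is strictly decreasing along the tail of the trajectory except where it meets the equilibrium set. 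Since $U$ is bounded below, $U(t)\downarrow U^{*}$ for some limit $U^{*}$.

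Finally I would invoke a LaSalle-type invariance argument for the autonomous $t'$-system. The $\omega$-limit set $\Omega$ of the confined trajectory is nonempty, compact, invariant, and contained in $\operatorname{Conv}(P(\mathrm{C},\mathbf{G}))$; because $U\equiv U^{*}$ on $\Omega$ its derivative vanishes there, forcing $\dot U=0$ on $\Omega$ and hence $\Omega\subseteq\mathcal{X}(\mathbf{G}(0))\cap\operatorname{Conv}(P(\mathrm{C},\mathbf{G}))$. When this intersection is a single equilibrium---as for the symmetric NE $(\tfrac13,\tfrac13,\tfrac13)$ reached on $\mathrm{C}_2$ of the Shapley game---the trajectory converges to that NE, which proves the claim; more generally one argues that the limit set collapses to one point using that $U$ converges and that the best responses are eventually periodic.

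The main obstacle I expect is making the Lyapunov/LaSalle step fully rigorous in the nonsmooth, best-response-driven setting: $U$ is only piecewise-$C^1$ because of the two $\max$ operators, so $\dot U$ exists merely for almost all $t'$, and one must argue that the a.e.\ strict negativity of $\dot U$ still yields genuine monotone decrease of the continuous $U$ (it does, via absolute continuity of $U$ along the piecewise-linear flow) and that the invariance principle applies to the differential-inclusion form of CFP arising from ties in $\BR$. The second delicate point is upgrading ``$\Omega$ lies in the NE set'' to ``the trajectory converges to a single NE''; this is immediate when the equilibrium in $\operatorname{Conv}(P(\mathrm{C},\mathbf{G}))$ is unique, but in general requires the cycle structure to rule out the trajectory drifting along a continuum of equilibria.
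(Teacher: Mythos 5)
Your proposal is correct and follows essentially the same route as the paper's proof: treat the best-response utility $U$ as a Lyapunov function on $\operatorname{Conv}(P(\mathrm{C},\mathbf{G}))$, use the hypothesis $\dot{U}<0$ off the equilibrium set to get monotone convergence of $U$, and conclude that the trajectory is driven into $\mathcal{X}(\mathbf{G}(0))$, with the same caveat about degenerate equilibrium sets versus an isolated NE. Your LaSalle-invariance and time-reparametrization formulation is in fact a more careful rendering of the step the paper dispatches as ``simple calculus analysis,'' so if anything you have filled in detail the paper leaves implicit.
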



Theorem \ref{thm:lyapnuov} gives a sufficient condition for a CFP to converge. Intuitively, when $|\lambda|$ is small enough, the dynamics of  $\mathbf{G}(\lambda)$ will resemble that of $\mathbf{G}(0)$. If CFPs on these games enter the same cycle of $\mathbf{G}(0)$, and $\dot{U}<0$ for all points related to this cycle, then CFPs on $\mathbf{G}(\lambda)$ converge to NE. 


\section{Conclusion and future work}\label{sec:conclusion}

Decomposing the game space into combinations of simple classes enables us to find the new relations among games.
In this paper, we use this method to prove a new condition for FP to converge and build a bridge between games modelling full competition and cooperation. 
We derive an instinct property for them that these two classes of games can be mutually transformed to the other with a simple parameter. This ability of mutual transformation may be applied in the dynamic multi-agent systems in which agents' relationships vary with time, which 
helps study complex real environments. Furthermore, we analyze the well-known example of FP's non-convergence, the Shapley game, and give a sufficient condition for its continuous version to converge. 

As for the future work, the first one is to have more analysis on FP dynamics using game decomposition techniques. We note that it will be of vital importance but great challenge to give a full characterization, and it is impossible to have a convergence guarantee on arbitrary combinations of games. Another interesting problem is to study the dynamical properties of simple games, e.g., zero-sum games, with perturbations. 


 \appendix

 \section{Full related work in Section \ref{sec: introduction}}

Fictitious Play is the first learning algorithm designed to calculate the value of a zero-sum game \cite{brown1949some,browniterative}.
It has been proved to converge on two-player zero-sum games \cite{Robinson1951iterative}, identical interest games \cite{Monderer1996ficitious}, potential games\cite{monderer1996potential},  $2\times 2$ games \cite{miyasawa1961convergence,metrick1994fictitious}, and  $2\times n$ games \cite{Berger2005fictitious} and games solvable by iterative elimination \cite{nachbar1990evolutionary}.
In comparison with their works, ours shows FP's convergence on general $m\times n$ games, under moderate conditions. 

The dynamic behaviors of FP on general-sum games are very complicated.
\citeauthor{krishna1998convergence} \shortcite{krishna1998convergence} proved that, for almost all cases (except a set with zero Lebesgue measure), FP converges in the form of a robust cycle, where the support size of the equilibrium is 2.
On the contrast, chaotic behaviors in FP were found when there are three or more strategies for each player \cite{cowan1993dynamical,richards1997geometry,van2011fictitious}.
Motivated by these studies, we study CFP in the analysis of the Shapley game using the idea in dynamical systems. 

Our analysis is mainly based on the techniques of game decomposition. 
\citeauthor{candogan2011flows} \shortcite{candogan2011flows} and \citeauthor{hwang2020strategic} \shortcite{hwang2020strategic} studied Hodge Decomposition and Strategic Decomposition for finite games, respectively.
Helmholtz Decomposition \cite{balduzzi2018mechanics}, on the other hand, studied the decomposition of continuous games into potential games and Hamiltonian games.
But few works are about making use of decomposition to analyze games.
To the best of our knowledge, utilizing game decomposition techniques to analyze game dynamics is still a new area. \citeauthor{tuyls2018symmetric} \shortcite{tuyls2018symmetric} decomposed an asymmetric bimatrix game, where two players have the same number of pure actions, into two one population games, from the evolutionary game theory point of view. \citeauthor{cheung2020chaos} \shortcite{cheung2020chaos} utilized the canonical decomposition of decomposing a game into a zero-sum game and an identical interest game to study the chaotic behaviors of general-sum games under multiplicative weight dynamics. 
We study the convergence properties of fictitious play. Instead of checking which component dominates the other, we model the mixture of two basis games by linear combinations and see how the game patterns evolve when the parameter changes smoothly and continuously, which can also be seen as a linear homotopy from the view of homotopy method. 

Another type of popular algorithms for solving games is no-regret learning. Similarly to FP, no-regret learning is also used as the base algorithm to find NE in zero-sum games, such as Poker AI \cite{brown2019superhuman}. 
In both FP and no-regret learning methods, each agent updates her policy separately to optimize individual payoff or regret, thus they can approach NE of zero-sum games, but have no guarantee for general-sum games. Besides, FP is not no-regret \cite{cesa2006prediction}. Although no-regret learning can converge to coarse correlated equilibria (CCE), CCE is not necessarily equal to NE in general-sum games, and is not our focused solution concept.

\section{Proofs of Lemma \ref{lem:strategic} in Section \ref{sec:backgroud}}
\lemstrategic*
\begin{proof}
    It suffices to prove the first equation, and the second equation is similar. 
    
    A non-strategic game can be expressed as $(\mathbf{1}_{m} \mathbf{u}^{\top}, \mathbf{v} \mathbf{1}_{n}^{\top})$, for some $\mathbf{u}\in \mathbb{R}^n$, $\mathbf{v}\in \mathbb{R}^m$. Then $A^\prime = A + \mathbf{1}_{m} \mathbf{u}^{\top}$. For any $\mathbf{p}\in\Delta_m$, $\mathbf{q}\in \Delta_n$, 
    \begin{equation*}
        \mathbf{p}^\top (\mathbf{1}_m\mathbf{u}^\top)\mathbf{q}=(\mathbf{p}^\top\mathbf{1}_m)(\mathbf{u}^\top\mathbf{q})=\mathbf{u}^\top\mathbf{q}
    \end{equation*}
    Given $\mathbf{q}\in \Delta_n$, and any $\mathbf{p} \in \operatorname{BR}_1(\mathbf{q},\mathbf{G})$, we have
    \begin{align}
        &\mathbf{p}^\top A \mathbf{q} &\geq& (\mathbf{p}^\prime)^\top A \mathbf{q}, ~\forall~\mathbf{p}^\prime \in \Delta_m \nonumber\\
        \Longleftrightarrow ~ &\mathbf{p}^\top \alpha A \mathbf{q}+\mathbf{u}^\top\mathbf{q} &\geq& (\mathbf{p}^\prime)^\top \alpha A \mathbf{q}+\mathbf{u}^\top\mathbf{q} \nonumber\\
       \Longleftrightarrow ~ & \mathbf{p}^\top A^\prime \mathbf{q} &\geq& (\mathbf{p}^\prime)^\top A^\prime \mathbf{q} \nonumber
    \end{align}
    Thus, $\mathbf{p} \in \operatorname{BR}_1(\mathbf{q},\mathbf{G})\Leftrightarrow \mathbf{p} \in \operatorname{BR}_1(\mathbf{q},\mathbf{G}^\prime)$, which completes the proof of the first equation. 
\end{proof}

\section{Proofs in Section \ref{sec:hodge}}


\subsection{The full version of Strategic decomposition}

\begin{theorem}
\label{app_thm: strategic}
(Strategic Decomposition~\cite{hwang2020strategic}) The space of games $\mathcal{G}$ can be decomposed as: 
\begin{enumerate}
	\item $\mathcal{G} = (\mathcal{I} \cap \mathcal{N})   \oplus (\mathcal{Z} + \mathcal{E})$;
	\item $\mathcal{G} = (\mathcal{I} + \mathcal{E})   \oplus (\mathcal{Z} \cap \mathcal{N})$;
	\item $\mathcal{G} = (\mathcal{I} \cap \mathcal{N})   \oplus (\mathcal{Z} \cap \mathcal{N}) \oplus \mathcal{B}$.
\end{enumerate}
where $\mathcal{B} \coloneqq (\mathcal{I} + \mathcal{E}) \cap (\mathcal{Z} + \mathcal{E})$ is the set of zero-sum equivalent potential games, $\oplus$ denotes the direct sum of two linear subspaces.

\end{theorem}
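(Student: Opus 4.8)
The plan is to pass to \emph{sum--difference coordinates} and thereby reduce all three decompositions to a single elementary fact about a matrix space. Define the linear isomorphism $\Phi:\mathcal{G}\to\doubleR^{m\times n}\times\doubleR^{m\times n}$ by $\Phi(A,B)=(A+B,\,A-B)$; it is invertible since $A=\tfrac{1}{2}(S+D)$, $B=\tfrac{1}{2}(S-D)$ whenever $\Phi(A,B)=(S,D)$. The workhorse is the two-way (ANOVA-type) decomposition of a single matrix space, $\doubleR^{m\times n}=\mathcal{N}_0\oplus\mathcal{C}$, where $\mathcal{N}_0:=\{M:\ \mathbf{1}_m^\top M=\mathbf{0}_n^\top,\ M\mathbf{1}_n=\mathbf{0}_m\}$ is the space of doubly-centered matrices (zero row and column sums) and $\mathcal{C}:=\{M:\ M_{ij}=a_i+b_j\}=\{\mathbf{a}\mathbf{1}_n^\top+\mathbf{1}_m\mathbf{b}^\top\}$ is the space of additively separable matrices. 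First I would establish this: the intersection is trivial because $M_{ij}=a_i+b_j$ with zero row and column sums forces $a_i$ and $b_j$ to be constant and hence $M=0$; existence of the splitting comes from the explicit projection $M\mapsto(M_{ij}-\bar{M}_{i\cdot}-\bar{M}_{\cdot j}+\bar{M})$ onto $\mathcal{N}_0$ with complementary part $\bar{M}_{i\cdot}+\bar{M}_{\cdot j}-\bar{M}\in\mathcal{C}$ (row, column, and grand means). Since $\dim\mathcal{N}_0=mn-m-n+1$ and $\dim\mathcal{C}=m+n-1$ sum to $mn$, this confirms $\doubleR^{m\times n}=\mathcal{N}_0\oplus\mathcal{C}$.

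Next I would compute the image of each named subspace under $\Phi$. The equivalences $A=B\iff D=0$ and $A+B=0\iff S=0$ give $\Phi(\mathcal{I})=\{D=0\}$ and $\Phi(\mathcal{Z})=\{S=0\}$, and intersecting with $\mathcal{N}$ pins the free coordinate into $\mathcal{N}_0$, yielding $\Phi(\mathcal{I}\cap\mathcal{N})=\mathcal{N}_0\times\{0\}$ and $\Phi(\mathcal{Z}\cap\mathcal{N})=\{0\}\times\mathcal{N}_0$. The delicate identifications are the ``$+\mathcal{E}$'' ones: for $\mathbf{E}=(\mathbf{1}_m\mathbf{u}^\top,\mathbf{v}\mathbf{1}_n^\top)\in\mathcal{E}$ one has $\Phi(\mathbf{E})=(\mathbf{1}_m\mathbf{u}^\top+\mathbf{v}\mathbf{1}_n^\top,\ \mathbf{1}_m\mathbf{u}^\top-\mathbf{v}\mathbf{1}_n^\top)$, so the $S$-component of $\mathcal{E}$ already sweeps out all of $\mathcal{C}$; combined with the free $D$-direction supplied by $\mathcal{Z}$ this gives $\Phi(\mathcal{Z}+\mathcal{E})=\mathcal{C}\times\doubleR^{m\times n}$, and symmetrically $\Phi(\mathcal{I}+\mathcal{E})=\doubleR^{m\times n}\times\mathcal{C}$. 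Intersecting the last two yields $\Phi(\mathcal{B})=\mathcal{C}\times\mathcal{C}$.

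Finally I would read off the three claims from the product structure, each reducing to $\doubleR^{m\times n}=\mathcal{N}_0\oplus\mathcal{C}$ applied coordinatewise. For (1), $\Phi$ sends $(\mathcal{I}\cap\mathcal{N})+(\mathcal{Z}+\mathcal{E})$ to $(\mathcal{N}_0\times\{0\})+(\mathcal{C}\times\doubleR^{m\times n})$, whose $S$-coordinate is $\mathcal{N}_0+\mathcal{C}=\doubleR^{m\times n}$ and whose $D$-coordinate is all of $\doubleR^{m\times n}$, the sum being direct because a common element $(S,0)$ needs $S\in\mathcal{N}_0\cap\mathcal{C}=\{0\}$; claim (2) is the mirror image with $S$ and $D$ interchanged. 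For (3), the three images $\mathcal{N}_0\times\{0\}$, $\{0\}\times\mathcal{N}_0$, and $\mathcal{C}\times\mathcal{C}$ are precisely the regrouping of the fourfold direct sum $(\mathcal{N}_0\oplus\mathcal{C})\times(\mathcal{N}_0\oplus\mathcal{C})$, so their sum is direct and exhausts $\doubleR^{m\times n}\times\doubleR^{m\times n}$; pulling back through $\Phi^{-1}$ gives the decomposition of $\mathcal{G}$. The dimension count $2(mn-m-n+1)+2(m+n-1)=2mn$ serves as a consistency check. I expect the main obstacle to be exactly the $\mathcal{E}$-translation of the second step: one must verify that adjoining $\mathcal{E}$ genuinely frees the entire complementary coordinate (not merely part of it) while simultaneously filling the $\mathcal{C}$-part of the other coordinate, since $\mathcal{E}$ itself is only $(m+n)$-dimensional and its own image is a proper, \emph{entangled} subspace of $\mathcal{C}\times\mathcal{C}$ rather than a product; the clean product form emerges only after $\mathcal{E}$ is combined with $\mathcal{Z}$ or with $\mathcal{I}$.
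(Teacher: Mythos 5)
Your proof is correct, and it is genuinely different from what the paper does: the paper does not prove this theorem at all, but imports it from \citeauthor{hwang2020strategic} \shortcite{hwang2020strategic}, whose argument (stated for general $n$-player games) is built on explicit projection operators onto the components --- the same projections the paper later invokes to compute $\mathbf{I}^{ij}$, $\mathbf{B}^{ij}$, $\mathbf{Z}^{ij}$ in the proof of Lemma \ref{lem:harmonic}. Your route instead exploits the bimatrix structure: the change of coordinates $\Phi(A,B)=(A+B,\,A-B)$ reduces all three statements to the single two-way ANOVA splitting $\doubleR^{m\times n}=\mathcal{N}_0\oplus\mathcal{C}$ applied coordinatewise. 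I checked the delicate points and they hold: the paper's $\mathcal{N}$ is one-sided per matrix ($A$ column-centered, $B$ row-centered), and you correctly observe that intersecting with $\mathcal{I}$ (where $A=B$) or $\mathcal{Z}$ (where $B=-A$) forces the one free matrix to be \emph{doubly} centered, giving $\Phi(\mathcal{I}\cap\mathcal{N})=\mathcal{N}_0\times\{0\}$ and $\Phi(\mathcal{Z}\cap\mathcal{N})=\{0\}\times\mathcal{N}_0$; your identification $\Phi(\mathcal{Z}+\mathcal{E})=\mathcal{C}\times\doubleR^{m\times n}$ (and its mirror) is right, since any $S\in\mathcal{C}$ is realizable as the sum-coordinate of some $\mathbf{E}\in\mathcal{E}$ and the entangled difference-coordinate of $\mathbf{E}$ is then absorbed by the free coordinate of $\mathcal{Z}$ (resp.\ $\mathcal{I}$); and since $\Phi$ is a linear isomorphism it commutes with sums and intersections, so $\Phi(\mathcal{B})=\mathcal{C}\times\mathcal{C}$, which matches the paper's explicit form of $\mathcal{B}$ in Eqn.\ (\ref{form of b}). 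The regrouping argument for the threefold direct sum in part 3 and the dimension count $2(mn-m-n+1)+2(m+n-1)=2mn$ close the argument. What each approach buys: yours is shorter, elementary, and self-contained for the two-player case the paper actually uses, whereas the cited projection machinery generalizes to $n$ players and yields the explicit component formulas needed elsewhere in the paper; your trick of diagonalizing the involution $(A,B)\mapsto(B,A)$ does not extend verbatim beyond two players.
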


\subsection{The equivalence between two definitions of potential games}

Another definition of potential game is: A game $\mathbf{G}=\left(A,B\right)$ is a potential game if there is a potential function $\phi$, represented by matrix $I\in \doubleR^{m\times n}$ s.t. $\forall j,j^\prime \in [n], i,i^\prime\in[m]$, $I_{ij}-I_{i^\prime j}=A_{ij}-A_{i^\prime j}$, $I_{ij}-I_{ij^\prime}=B_{ij}-B_{i j^\prime}$.
which means the potential function catches the incentives for players to change their strategies. 

For any fixed $i_0\in [m]$ and $j_0\in[n]$, $\forall j \in [n], i \in [m]$, $A_{ij}=I_{ij}+A_{i_0 j}-I_{i_0 j}$, $B_{ij}=I_{ij}+B_{i j_0}-I_{i j_0}$, this means $\mathbf{G}=(I,I)+((A_{i_0 j}-I_{i_0 j})_{ij},(B_{i j_0}-I_{i j_0})_{ij})$, i.e. a sum of an identical interest and a non-strategic game. 

For any game $\mathbf{P}=\mathbf{I}+\mathbf{E}\in \mathcal{I+E}$, where $\mathbf{I}=(I,I)\in\mathcal{I}$, $\mathbf{E}\in\mathcal{E}$, it can be easily verified that $I$ is the potential function mentioned in the above definition. 

\subsection{Proof that any harmonic game has FPP}
\begin{proposition}
Every harmonic game has FPP.
\end{proposition}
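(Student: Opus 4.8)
The plan is to prove the sharper statement that every harmonic game lies in $\mathcal{S(Z)}$, i.e.\ is strategically equivalent to a zero-sum game, and then to conclude by combining Lemma~\ref{lem:strategic} with the classical convergence of FP on two-player zero-sum games \cite{Robinson1951iterative}. The entire argument reduces to one scaling observation, so no hard estimate is involved.

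First I would write an arbitrary harmonic game as $\mathbf{G}=(A,B)+\mathbf{E}$ with $(A,B)\in\mathcal{H}$ and $\mathbf{E}\in\mathcal{E}$, which is legitimate because $\mathcal{H}+\mathcal{E}$ is exactly the set of harmonic games. Recalling that player~$1$ has $m$ actions and player~$2$ has $n$ actions, the defining condition of $\mathcal{H}$ is $mA+nB=0$, so
\begin{equation*}
B=-\frac{m}{n}\,A.
\end{equation*}
The key step is to rescale the second player's payoff. Taking $\alpha=1$ and $\beta=\frac{n}{m}$, both strictly positive, gives
\begin{equation*}
\left(\alpha A,\ \beta B\right)=\left(A,\ \frac{n}{m}\left(-\frac{m}{n}A\right)\right)=(A,-A)\in\mathcal{Z}.
\end{equation*}
Thus the normalized part $(A,B)$ is strategically equivalent (Definition~\ref{def: se}) to the zero-sum game $(A,-A)$.

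Next I would absorb the non-strategic part: for $\mathbf{G}=(A,B)+\mathbf{E}$ the same choice $\alpha=1$, $\beta=\frac{n}{m}$ shows that $(A,-A)$ equals $(\alpha A_{\mathbf{G}},\beta B_{\mathbf{G}})$ plus a leftover term, and since $\mathbf{E}$ together with the $\beta$-rescaling of its two blocks remains non-strategic, this leftover can be collected into a single $\mathbf{E}'\in\mathcal{E}$. Hence $\mathbf{G}\in\mathcal{S(Z)}$. By Lemma~\ref{lem:strategic}, $\mathbf{G}$ and $(A,-A)$ share the same best-response sets at every belief $(\bfp,\bfq)$, so the update rule of Eqn.~(\ref{dfp update}) generates identical FP sequences on both games; in particular they enjoy the same FPP. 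Since FP converges on every two-player zero-sum game \cite{Robinson1951iterative}, every FP sequence of $\mathbf{G}$ converges, which is the proposition.

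I do not anticipate a substantive obstacle here; the only delicate point is checking that the scaling factor $\beta=\frac{n}{m}$ is strictly positive, which is exactly what Definition~\ref{def: se} requires (it demands $\alpha,\beta\in\doubleR_{+}$), and this holds trivially because $m$ and $n$ are positive action-set cardinalities. The one degenerate subcase worth a sentence is $A=0$: then $(A,-A)$ is itself non-strategic, every belief pair is already an equilibrium, and the FP sequence converges immediately, so the conclusion still holds.
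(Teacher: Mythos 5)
Your argument is correct and is essentially the paper's own proof: both rescale the normalized part $(A,B)$ with positive constants so that the defining identity $mA+nB=0$ turns it into a zero-sum game (the paper uses $\alpha=m,\beta=n$; you use $\alpha=1,\beta=n/m$, which is immaterial), then invoke the preservation of best responses under strategic/additional equivalence together with Robinson's convergence theorem. No gap.
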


\begin{proof}
Any normalized harmonic game $\mathbf{H}=(A,B)$ satisfies 
\begin{equation*}
    mA+nB=0
\end{equation*}
Let $\mathbf{H}^\prime =(A^\prime, B^\prime)$, s.t.\ $A^\prime = mA$, $B^\prime = nB$, then $\mathbf{H}^\prime$ is a zero-sum game and strategically equivalent to $\mathbf{H}$, thus FP converges on $\mathbf{H}$. Since adding a  non-strategic component does not change the best response dynamics of the game, FP converges on any harmonic game. 
\end{proof}

\subsection{A proposition for better understanding}

\begin{proposition}
\label{harmonic decomposition}
For any 
normalized 
harmonic game $\mathbf{H}\in \mathcal{H}$, $\mathbf{H}=\mathbf{I}+\mathbf{Z}$, where $\mathbf{I}\in (\mathcal{I}\cap \mathcal{N})$, $\mathbf{Z}\in (\mathcal{Z}\cap \mathcal{N})$,  
and 
$\lambda \mathbf{I+Z}$ has FPP, for any $\lambda \in \mathbb{R}$. 
\end{proposition}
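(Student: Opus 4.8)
The plan is to split the statement into two parts: first exhibit the decomposition $\mathbf{H}=\mathbf{I}+\mathbf{Z}$, and then establish FPP of $\lambda\mathbf{I}+\mathbf{Z}$ by a strategic-equivalence argument. For the decomposition, write $\mathbf{H}=(A,B)$ and set $I=\tfrac12(A+B)$ and $Z=\tfrac12(A-B)$, so that $\mathbf{I}=(I,I)$ and $\mathbf{Z}=(Z,-Z)$ manifestly lie in $\mathcal{I}$ and $\mathcal{Z}$ and sum to $\mathbf{H}$ (indeed $I+Z=A$ and $I-Z=B$). The only content is checking the normalization. I would use that $\mathbf{H}\in\mathcal{N}$ already gives the column sums of $A$ and the row sums of $B$ zero, and that the harmonic relation $mA+nB=0$, i.e. $B=-\tfrac{m}{n}A$, propagates this: it forces the row sums of $A$ and the column sums of $B$ to vanish as well. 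Hence $A$ and $B$, and therefore $I$ and $Z$, have all row and all column sums zero, so $\mathbf{I}\in\mathcal{I}\cap\mathcal{N}$ and $\mathbf{Z}\in\mathcal{Z}\cap\mathcal{N}$, as required.

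The key observation for the second part is that the harmonic relation makes $\mathbf{I}$ and $\mathbf{Z}$ dependent: substituting $B=-\tfrac{m}{n}A$ gives $I=\tfrac{n-m}{2n}A$ and $Z=\tfrac{n+m}{2n}A$, so both components are scalar multiples of the single matrix $A$. Consequently $\lambda\mathbf{I}+\mathbf{Z}=(aA,\,bA)$, where $a=\lambda\tfrac{n-m}{2n}+\tfrac{n+m}{2n}$ and $b=\lambda\tfrac{n-m}{2n}-\tfrac{n+m}{2n}$ are scalars depending on $\lambda$. I would then run a short case analysis on the signs of $a$ and $b$. When $ab>0$ the two payoff matrices are positive multiples of the same matrix, so $(aA,bA)$ is strategically equivalent (Definition~\ref{def: se}, taking the non-strategic part to be zero) to $(A,A)$ or to $(-A,-A)$, an identical interest game; when $ab<0$ it is strategically equivalent to $(A,-A)$ or $(-A,A)$, a zero-sum game. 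In either case Lemma~\ref{lem:strategic} transfers the best-response structure, and since identical interest games and zero-sum games have FPP, so does $\lambda\mathbf{I}+\mathbf{Z}$.

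The remaining, slightly more delicate, case is $ab=0$, i.e. $a=0$ or $b=0$, where one player's payoff matrix is the zero matrix. This matrix has the degenerate form $\mathbf{x}\mathbf{1}_n^\top+\mathbf{1}_m\mathbf{y}^\top$ with $\mathbf{x}=\mathbf{0}$ and $\mathbf{y}=\mathbf{0}$, so the game lies in the class $\mathcal{D}$ and FPP follows from Corollary~\ref{coro:new game d}; concretely, the player whose matrix vanishes is indifferent and plays a fixed action under the tie-breaking rule, the opponent's empirical belief converges to the corresponding vertex, and the opponent's best response then stabilizes, so the play converges to a pure NE. I expect this degenerate case to be the main obstacle, since it is the only place where strategic equivalence to a basis game breaks down and one must argue convergence directly, taking care about ties. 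Everything else is routine: the proposition is essentially a transparent special case of Theorem~\ref{thm: combine}, made even simpler here because the harmonic structure forces both components to be scalar multiples of $A$.
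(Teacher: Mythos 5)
Your proof is correct, and the second half (FPP of $\lambda\mathbf{I}+\mathbf{Z}$) is essentially the paper's own argument: the paper likewise uses the harmonic relation to write $I=\tfrac{n-m}{m+n}Z$, concludes that both payoff matrices of $\lambda\mathbf{I}+\mathbf{Z}$ are scalar multiples of one matrix, and splits on the signs of the two coefficients (identical-interest equivalent, zero-sum equivalent, or one matrix vanishing, in which case convergence is immediate). Your parametrization in terms of $A$ rather than $Z$ is cosmetic. Where you genuinely diverge is the existence of the decomposition $\mathbf{H}=\mathbf{I}+\mathbf{Z}$ with both parts normalized: the paper proves this as a separate lemma by exhibiting a basis $\{\mathbf{G}^{ij}\}$ of $\mathcal{H}$ and computing its Strategic Decomposition via the projection formulas of Hwang--Rey-Bellet, showing the $\mathcal{B}$-component of each basis element is zero; you instead write down $I=\tfrac12(A+B)$, $Z=\tfrac12(A-B)$ directly and check normalization by propagating the zero row/column sums through $B=-\tfrac{m}{n}A$. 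Your route is more elementary and self-contained; the paper's buys the additional uniqueness statement and situates the fact inside the Strategic Decomposition framework (showing $\mathcal{H}\subset(\mathcal{I}\cap\mathcal{N})\oplus(\mathcal{Z}\cap\mathcal{N})$, i.e.\ no zero-sum-equivalent-potential component), which it reuses conceptually elsewhere. One small nit: Corollary~\ref{coro:new game d} is about combinations of the Hodge components of a game in $\mathcal{D}$, not literally about FPP of the game itself, so your direct dominant-strategy/tie-breaking argument is the right justification for the $ab=0$ case rather than that citation.
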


Proposition \ref{harmonic decomposition} shows that the convergence of FP on harmonic games is actually the result of trade-off between cooperation and competition. 
 The idea is that in harmonic games, when the fraction of identical interest part, $\lambda$, increases, the cooperative part will occupy and $\lambda \mathbf{I} + \mathbf{Z}$ is an identical interest equivalent game. When $\lambda$ decreases, the competitive part will take over. Similar idea is used in the proof of our main theorem. 


To prove Proposition \ref{harmonic decomposition}, we first prove a Lemma. From Strategic Decomposition, a normalized potential game only has normalized identical interest component and zero-sum equivalent potential component. Lemma \ref{lem:harmonic} tells that a normalized harmonic game actually only has normalized identical interest component and normalized zero-sum component. That is, $\mathcal{P}\subset (\mathcal{I\cap N})+(\mathcal{Z\cap N})$.

\begin{lemma}
\label{lem:harmonic}
A normalized harmonic game $\mathbf{H}$ can be decomposed uniquely into a normalized identical interest game and a normalized zero-sum game. 
\end{lemma}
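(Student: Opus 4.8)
The plan is to use the canonical symmetric/antisymmetric decomposition of a bimatrix game and then verify that the harmonic and normalization hypotheses place both pieces in the required subspaces. Write $\mathbf{H}=(A,B)\in\mathcal{H}$, so that the column sums of $A$ vanish, the row sums of $B$ vanish, and $mA+nB=0$. Set
\[
    I=\tfrac{1}{2}(A+B),\qquad Z=\tfrac{1}{2}(A-B),
\]
and define $\mathbf{I}=(I,I)$ and $\mathbf{Z}=(Z,-Z)$. By construction $\mathbf{I}\in\mathcal{I}$, $\mathbf{Z}\in\mathcal{Z}$, and $\mathbf{I}+\mathbf{Z}=(I+Z,\,I-Z)=(A,B)=\mathbf{H}$, so the algebraic identity is immediate; the entire content is to check that $\mathbf{I}$ and $\mathbf{Z}$ are normalized.

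First I would use the harmonic identity $B=-\tfrac{m}{n}A$ to upgrade the one-sided normalization of $\mathbf{H}$ to full normalization of each payoff matrix. Since the row sums of $B$ vanish and $B$ is a nonzero scalar multiple of $A$, the row sums of $A$ also vanish; combined with the given vanishing of the column sums of $A$, this shows that $A$, and hence $B$, has both zero row sums and zero column sums. As $I$ and $Z$ are linear combinations of the doubly normalized $A$ and $B$, they inherit vanishing row and column sums, which is exactly what $\mathbf{I}\in\mathcal{N}$ demands of $I$ and $\mathbf{Z}\in\mathcal{N}$ demands of $Z$.

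For uniqueness I would run the standard direct-sum argument. If $\mathbf{I}_1+\mathbf{Z}_1=\mathbf{I}_2+\mathbf{Z}_2$ are two such decompositions, then $\mathbf{I}_1-\mathbf{I}_2=\mathbf{Z}_2-\mathbf{Z}_1$ lies in $(\mathcal{I}\cap\mathcal{N})\cap(\mathcal{Z}\cap\mathcal{N})\subseteq\mathcal{I}\cap\mathcal{Z}$, and any game with $A=B$ and $A+B=0$ must be the zero game; hence $\mathbf{I}_1=\mathbf{I}_2$ and $\mathbf{Z}_1=\mathbf{Z}_2$.

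The step I expect to carry the weight is the normalization check, not any delicate computation. The symmetric/antisymmetric split does not preserve $\mathcal{N}$ for a general normalized game, and it is precisely the harmonic constraint $mA+nB=0$ that forces each payoff matrix to be doubly normalized and thereby makes the split respect $\mathcal{N}$. So the main thing to get right is the propagation of normalization through the harmonic identity, after which existence and uniqueness both fall out cleanly.
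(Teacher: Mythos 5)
Your proof is correct, and it takes a genuinely different route from the paper's. The paper argues via a basis of $\mathcal{H}$: it takes the standard basis games $\mathbf{G}^{ij}=(nA^{ij},-mA^{ij})$ of the normalized harmonic subspace, applies the projection operators of the Strategic Decomposition to each one, observes that the zero-sum-equivalent-potential component $\mathbf{B}^{ij}$ vanishes, and then invokes linearity to conclude the same for every $\mathbf{H}\in\mathcal{H}$; uniqueness comes for free from the direct-sum structure of Theorem \ref{thm: strategic}. You instead give closed-form components for an arbitrary $\mathbf{H}=(A,B)$ directly, namely $I=\tfrac12(A+B)$ and $Z=\tfrac12(A-B)$, and your key step --- using $mA+nB=0$ to upgrade the one-sided normalization of $(A,B)$ to double normalization of $A$ and $B$, which is exactly what makes the symmetric/antisymmetric split land in $\mathcal{I}\cap\mathcal{N}$ and $\mathcal{Z}\cap\mathcal{N}$ --- is sound, as is your uniqueness argument via $\mathcal{I}\cap\mathcal{Z}=\{0\}$. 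Your version is more elementary and self-contained (it needs neither the basis of $\mathcal{H}$ from \citeauthor{candogan2011flows} nor the projection formulas of \citeauthor{hwang2020strategic}) and it produces explicit formulas for the components, e.g.\ $I=\tfrac{n-m}{2n}A$; indeed, specializing your formulas to the basis games recovers exactly the coefficients $\tfrac{n-m}{2}$ and $\tfrac{m+n}{2}$ the paper reports. What the paper's route buys is the placement of the result inside the projection machinery of the Strategic Decomposition, which is the framework the surrounding results are phrased in.
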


\begin{proof}
The main idea of the proof is to show that a basis of $\mathcal{H}$ belongs to $\left(\mathcal{I\cap N}\right)\oplus \left(\mathcal{Z\cap N}\right)$.
This means in the decomposition of Theorem \ref{thm: strategic}, the basis of $\mathcal{H}$ has no zero-sum equivalent potential component. 
Then as any normalized harmonic game is a linear combination of the basis, by the closure of linear spaces, all normalized harmonic games have no zero-sum equivalent potential component. 

For any $i\in \left\{1,\dots,m-1\right\}$, $j\in \left\{1,\dots,n-1\right\}$, define bimatrix games $\mathbf{G}^{ij}=(nA^{ij},-mA^{ij})$, where $A^{ij}\in \mathbb{R}^{m\times n}$ such that 
$$
A_{k l}^{i j}=\left\{\begin{aligned}
1 & ~~~~~\text { if }(k, l)=(i, j) \text { or }(k, l)=(i+1, j+1); \\
-1 & ~~~~~\text { if }(k, l)=(i+1, j)  \text { or }(k, l)=(i, j+1); \\
0 & ~~~~~\text { otherwise. }
\end{aligned}\right.
$$
Then $\left\{\mathbf{G}^{ij}\right\}$ is a basis of $\mathcal{H}$ \cite{candogan2011flows}. 
To find the three components of its Strategic Decomposition, consider the projection methods proposed by \citeauthor{hwang2020strategic} \shortcite{hwang2020strategic}.
Precisely, denote its identical interest component as $\mathbf{I}^{ij}$, zero-sum equivalent potential component as $\mathbf{B}^{ij}$ and normalized zero-sum component as $\mathbf{Z}^{ij}$ respectively, we have
\begin{equation*}
    \begin{aligned}
      \mathbf{I}^{ij}&=\frac{n-m}{2}(A^{ij},A^{ij})\\
      \mathbf{B}^{ij}&=0\\
      \mathbf{Z}^{ij}&=\frac{m+n}{2}\left(A^{ij},-A^{ij}\right)
    \end{aligned}
\end{equation*}
\end{proof}

\begin{proof}[Proof of Proposition \ref{harmonic decomposition}]
Let $\mathbf{H}=\mathbf{I+Z}=\left(I,I\right)+\left(Z,-Z\right)$, by the definition of normalized harmonic games, we have
\begin{equation*}
m(I+Z)+n(I-Z)=0
\end{equation*}
This means
\begin{equation*}
I=\frac{n-m}{m+n}Z
\end{equation*}

Then for any $\lambda\in \mathbb{R}$, 

\begin{equation*}
\small{
 \begin{aligned}
 \mathbf{G}_\lambda \coloneqq& \lambda \mathbf{I}+\mathbf{Z} = \left(\lambda I,\lambda I\right)+\left(Z,-Z\right)\\
 = &\left(\frac{\left(\lambda+1\right)n-\left(\lambda-1\right)m}{m+n}Z,\frac{\left(\lambda-1\right)n-\left(\lambda+1\right)m}{m+n}Z\right)
 \end{aligned}}
 \end{equation*}

If $n=m$, $\mathbf{G}_\lambda$ is always $\mathbf{Z}$ and we obtain the convergence result directly.
Thus w.l.o.g., assume $n>m$. 
When $\lambda$ satisfies $\left(\lambda-1\right)n-\left(\lambda+1\right)m>0$, $\mathbf{G}_\lambda$ is strategically equivalent to an identity interest game $\left(Z,Z\right)$. When $\left(\lambda-1\right)n-\left(\lambda+1\right)m<0$, $\mathbf{G}_\lambda$ is strategically equivalent to a  zero-sum game $\left(Z,-Z\right)$. When $\left(\lambda-1\right)n-\left(\lambda+1\right)m=0$, $\mathbf{G}_\lambda$ becomes $\left(\frac{\left(\lambda+1\right)n-\left(\lambda-1\right)m}{m+n}Z, 0\right)$ and the convergence is trivial. 
\end{proof}

\subsection{Proof of Lemma \ref{se close zero-sum potential}}

\secloseb*

\begin{proof}
We only prove that when $\mathbf{G}\in \mathcal{S(Z)}$, $\mathbf{G}+\mathbf{B} \in \mathcal{S(Z)}$. The case when $\mathbf{G} \in \mathcal{S(I)} $ is similar. 

By the definition of non-strategic games and Eqn. (\ref{form of b}), 
\begin{equation}
\label{eqn:zero-sum equivalent potential}
\begin{aligned}
 \mathbf{B}&=(\mathbf{u}\mathbf{1}^\top_n, \mathbf{1}_m\mathbf{v}^\top)+(\mathbf{1}_m\mathbf{x}^{\top},\mathbf{y}\mathbf{1}^{\top}_{n})\\ 
 &= \left(\mathbf{u}\mathbf{1}^\top_n+\mathbf{1}_m\mathbf{x}^{\top}, \mathbf{1}_m\mathbf{v}^\top+\mathbf{y}\mathbf{1}^{\top}_{n}\right),  
\end{aligned}
\end{equation}
for some $\mathbf{u}, \mathbf{y}\in \mathbb{R}^m$, $\mathbf{v},\mathbf{x}\in \mathbb{R}^n$. 

Since $\mathbf{G}=(\alpha Z, -\beta Z)$ for some $\alpha, \beta >0$ and $Z \in \doubleR^{m\times n}$, then
\begin{equation*}
\begin{aligned}
\mathbf{G} + \mathbf{B} =&\left(\alpha \left(Z+\left(\frac{1}{\alpha}\mathbf{u}\right)\mathbf{1}^\top_n+\mathbf{1}_m\left(\frac{1}{\alpha}\mathbf{x}\right)^{\top}\right),\right.\\
&\left.\phantom{(}\beta\left(-Z+\mathbf{1}_m\left(\frac{1}{\beta}\mathbf{v}\right)^\top+\left(\frac{1}{\beta}\mathbf{y}\right)\mathbf{1}^{\top}_{n} \right)\right)
\end{aligned}
\end{equation*}
is strategically equivalent to $\mathbf{Z}+\mathbf{B}^\prime$,  for zero-sum game $\mathbf{Z}=(Z,-Z)$ and zero-sum equivalent potential game $\mathbf{B}^\prime = \left(\left(\frac{1}{\alpha}\mathbf{u}\right)\mathbf{1}^\top_n+\mathbf{1}_m\left(\frac{1}{\alpha}\mathbf{x}\right)^{\top},\mathbf{1}_m\left(\frac{1}{\beta}\mathbf{v}\right)^\top+\left(\frac{1}{\beta}\mathbf{y}\right)\mathbf{1}^{\top}_{n}\right)$, and thus additionally equivalent to $\mathbf{Z}+\mathbf{Z}^\prime$ for $\mathbf{Z}, \mathbf{Z}^\prime \in \mathcal{Z}$. This means $\mathbf{G}+\mathbf{B}$ is strategically equivalent to a zero-sum game. 
\end{proof}

\subsection{Full Proof of Theorem \ref{thm: combine}}

\thmcombine*

\begin{proof}
Let $\mathbf{P}=\mathbf{I}+\mathbf{E}=(I,I)+\mathbf{E}$, where $\mathbf{I} \in \mathcal{I}$, $\mathbf{E}\in \mathcal{E}$, $I\in \doubleR^{m\times n}$. $\mathbf{H}$ can be formulated as $(nZ,-mZ)$ for some $Z \in \doubleR^{m\times n}$.

When $\mathbf{G}=(Z^\prime, -\alpha Z^\prime) + \mathbf{E}^\prime \in \mathcal{S(Z)}$, where $\alpha >0$, $\mathbf{E}^\prime \in \mathcal{E}$, then we have
\begin{equation*}
    (I,I)+(nZ,-mZ) = (Z^\prime, -\alpha Z^\prime) + \mathbf{E}^\prime - \mathbf{E}
\end{equation*}

By letting $\mathbf{E}^\prime - \mathbf{E}= (E_1, E_2)$ where $E_1, E_2 \in \doubleR^{m \times n}$ satisfying $E_1=\mathbf{1}_m\mathbf{u}^\top$,  $E_2=\mathbf{v}\mathbf{1}^\top_n$, for some $\mathbf{u} \in \doubleR^n$, $\mathbf{v} \in \doubleR^m$, we have the following equation system:
\begin{equation}
    \label{eqn:solve1-supp}
    \begin{aligned}
    I + n Z & = Z^\prime + E_1\\
    I - m Z & = -\alpha Z^\prime + E_2
    \end{aligned}
\end{equation}
By solving linear equation system (\ref{eqn:solve1-supp}), we have
\begin{equation*}
    \begin{aligned}
    I & = \frac{m-n \alpha}{m+n}Z^\prime + \frac{m}{m+n}E_1+\frac{n}{m+n}E_2\\
    Z & = \frac{1+\alpha}{m+n}Z^\prime+\frac{1}{m+n}E_1-\frac{1}{m+n}E_2
    \end{aligned}
\end{equation*}

Thus for any $\lambda \in \doubleR$,  $\lambda \mathbf{P} + (1-\lambda) \mathbf{H}$ equals
\begin{equation*}
    \begin{aligned}
     &\phantom{=}\left(a_1(\lambda)Z^\prime + a_2(\lambda)E_1+a_3(\lambda)E_2, \right. \\
    &\phantom{=(}\left. b_1(\lambda)Z^\prime +b_2(\lambda)E_1+b_3(\lambda)E_2 \right)+\lambda\mathbf{E}\\
   &= \left((a_1(\lambda)Z^\prime,b_1(\lambda)Z^\prime\right)+\left(a_2(\lambda)E_1+a_3(\lambda)E_2,\right.\\
   &\phantom{=((}\left.b_2(\lambda)E_1+b_3(\lambda)E_2\right)+ \lambda\mathbf{E}
    \end{aligned}
\end{equation*}
where 
\begin{equation*}
    \begin{aligned}
    a_1(\lambda) &= \frac{\left(m-n-2n\alpha\right)\lambda+n\left(1+\alpha\right)}{m+n}\\
    a_2(\lambda) &= \frac{\left(m-n\right)\lambda+n}{m+n}\\
    a_3(\lambda) &= \frac{n\left(2\lambda-1\right)}{m+n}\\
    b_1(\lambda) &= \frac{\left(2m+(m-n)\alpha\right)\lambda -m \left(1+\alpha\right)}{m+n}\\
    b_2(\lambda) &= \frac{m\left(2\lambda-1\right)}{m+n}\\
    b_3(\lambda) &= \frac{\left(n-m\right)\lambda+m}{m+n}
    \end{aligned}
\end{equation*}

\begin{figure*}[htbp]
\centering
\begin{align*}
&\begin{bmatrix} 
-14, 21 & -20, 30 & -14, 21 \\ 
18, -27 & 14, -21 & 2, -3 \\ 
-18,27  & 0, 0 & -16, 24 
\end{bmatrix}
=
\left(2\begin{bmatrix} 
-7 & -10 & -7 \\ 
9 & 7 & 1 \\ 
-9 & 0 & -8 
\end{bmatrix},
3\begin{bmatrix} 
7 & 10 & 7 \\ 
-9 & -7 & -1 \\ 
9 & 0 & 8  
\end{bmatrix}\right)
+
\begin{bmatrix} 
0, 0 & 0, 0 & 0, 0 \\ 
0, 0 & 0, 0 & 0, 0 \\ 
0, 0 & 0, 0 & 0, 0
\end{bmatrix}\\
&=
\underbrace{\begin{bmatrix} 
-11, -\frac{4}{3} & -\frac{53}{6}, -\frac{19}{6} & -\frac{73}{6}, \frac{9}{2} \\ 
~\\[-2ex]
\frac{91}{6},-\frac{5}{2} &\frac{101}{6}, -\frac{29}{6} & 18, \frac{22}{3} \\ 
~\\[-2ex]
-\frac{25}{6}, \frac{5}{6} & -8, -7 & -\frac{35}{6}, \frac{37}{6} 
\end{bmatrix}}_{\text{Potential Game}}
+
\underbrace{\begin{bmatrix} 
\frac{5}{3}, -\frac{5}{3} & -\frac{55}{6},\frac{55}{6} & \frac{15}{2}, -\frac{15}{2}\\ 
~\\[-2ex]
\frac{15}{2}, -\frac{15}{2} & -\frac{5}{6}, \frac{5}{6} & -\frac{20}{3}, \frac{20}{3} \\ 
~\\[-2ex]
-\frac{55}{6}, \frac{55}{6} & 10, -10 & -\frac{5}{6}, \frac{5}{6}
\end{bmatrix}}_{\text{Harmonic Game}}
+
\underbrace{\begin{bmatrix} 
-\frac{14}{3}, 24 & -2, 24 & -\frac{28}{3}, 24 \\ 
~\\[-2ex]
-\frac{14}{3}, -17 & -2, -17 & -\frac{28}{3}, -17 \\ 
~\\[-2ex]
-\frac{14}{3}, 17 & -2, 17 & -\frac{28}{3}, 17
\end{bmatrix}}_{\text{Non-strategic Game}} 
\end{align*}
\caption{The hodge decomposition of Example \ref{equivalent example}}
\label{hodge_decomposition_of_example1}
\end{figure*}

By the definition of non-strategic games and Eqn. (\ref{eqn:zero-sum equivalent potential}), $\left(a_2(\lambda)E_1+a_3(\lambda)E_2, b_2(\lambda)E_1+b_3(\lambda)E_2\right) \in \mathcal{B}$ is a zero-sum equivalent potential game. When $a_1(\lambda)b_1(\lambda)<0$, game  $(a_1(\lambda) Z^\prime, b_1(\lambda)Z^\prime) \in \mathcal{S(Z)}$. When $a_1(\lambda)b_1(\lambda)>0$, game  $(a_1(\lambda) Z^\prime, b_1(\lambda)Z^\prime) \in \mathcal{S(I)}$. By Lemma \ref{se close zero-sum potential}, $\lambda \mathbf{P}+ (1-\lambda)\mathbf{H} \in \mathcal{S(I)}\cup \mathcal{S(Z)}$ and thus has FPP.  

When $a_1(\lambda)b_1(\lambda)=0$, $\lambda \mathbf{P} + (1-\lambda)\mathbf{H}$ has one payoff matrix in the form of $\mathbf{x}\mathbf{1}^\top_n+\mathbf{1}_m\mathbf{y}^{\top}$ for some $\mathbf{x} \in \doubleR^m$, $\mathbf{y}\in \doubleR^n$. Then the player with this kind of payoff matrix has a dominant strategy. During each time step of fictitious play, the player will choose her dominant strategy, while the other best response to that dominant strategy, and the sequence will converge to a pure NE.

When $\mathbf{G}=(I^\prime, \alpha I^\prime) + \mathbf{E}^\prime$, where $\alpha >0$, $\mathbf{E}^\prime \in \mathcal{E}$, then we have
\begin{equation*}
    (I,I)+(nZ,-mZ) = (I^\prime, \alpha I^\prime) + \mathbf{E}^\prime - \mathbf{E}
\end{equation*}

By letting $\mathbf{E}^\prime - \mathbf{E}= (E_1, E_2)$ for some $E_1, E_2 \in \doubleR^{m \times n}$, we have the following equation system:
\begin{equation}
    \label{eqn:solve2}
    \begin{aligned}
    I + n Z & = I^\prime + E_1\\
    I - m Z & = \alpha I^\prime + E_2
    \end{aligned}
\end{equation}
By solving linear equation system (\ref{eqn:solve2}), we have

\begin{equation*}
    \begin{aligned}
    I & = \frac{m+n \alpha}{m+n}I^\prime + \frac{m}{m+n}E_1+\frac{n}{m+n}E_2\\
    Z & = \frac{1-\alpha}{m+n}I^\prime+\frac{1}{m+n}E_1-\frac{1}{m+n}E_2
    \end{aligned}
\end{equation*}

Thus for any $\lambda \in \doubleR$, $\lambda \mathbf{P} + (1-\lambda) \mathbf{H} $ equals
\begin{equation*}
    \begin{aligned}
  &\phantom{=}\left(a_1(\lambda)I^\prime + a_2(\lambda)E_1+a_3(\lambda)E_2,\right.\\ 
  &\phantom{=(}\left. b_1(\lambda)I^\prime +b_2(\lambda)E_1+b_3(\lambda)E_2 \right)+\lambda\mathbf{E}\\
   &= \left((a_1(\lambda)I^\prime,b_1(\lambda)I^\prime\right) +\left(a_2(\lambda)E_1+a_3(\lambda)E_2,\right.\\
   &\phantom{=((}\left. b_2(\lambda)E_1+b_3(\lambda)E_2\right) + \lambda\mathbf{E}
    \end{aligned}
\end{equation*}
where 
\begin{equation*}
    \begin{aligned}
    a_1(\lambda) &= \frac{\left(m-n+2n\alpha\right)\lambda+n\left(1-\alpha\right)}{m+n}\\
    a_2(\lambda) &= \frac{\left(m-n\right)\lambda+n}{m+n}\\
    a_3(\lambda) &= \frac{n\left(2\lambda-1\right)}{m+n}\\
    b_1(\lambda) &= \frac{\left(2m-(m-n)\alpha\right)\lambda -m \left(1-\alpha\right)}{m+n}\\
    b_2(\lambda) &= \frac{m\left(2\lambda-1\right)}{m+n}\\
    b_3(\lambda) &= \frac{\left(n-m\right)\lambda+m}{m+n}
    \end{aligned}
\end{equation*}

By the definition of non-strategic games and Eqn. (\ref{eqn:zero-sum equivalent potential}), $\left(a_2(\lambda)E_1+a_3(\lambda)E_2, b_2(\lambda)E_1+b_3(\lambda)E_2\right) \in \mathcal{B}$ is a zero-sum equivalent potential game. By the proof of Theorem \ref{thm: combine}, $\lambda \mathbf{P}+ (1-\lambda) \mathbf{H}$ is either strategically equivalent to an identical interest game or to a zero-sum game. Or it has a pure NE for FP to converge to. 
\end{proof}

\subsection{Proof of Proposition \ref{same intersection}}

\sameintersection*

\begin{proof}
It is obvious that $\mathcal{A(Z)\cap A(I)}\subset \mathcal{S(Z)\cap S(I)}$. 
For any $\mathbf{G}=(A,B) \in \mathcal{S(Z)\cap S(I)}$, $(A,B)=(Z,-\alpha Z)+\mathbf{E}_1 = (I,\beta I) +\mathbf{E}_2$, for some $\alpha,~ \beta >0$, $Z,~ I \in \doubleR^{m\times n}$, $\mathbf{E}_1,~\mathbf{E}_2 \in \mathcal{E}$.  Let $\mathbf{E}_2-\mathbf{E}_1=(E_1,E_2)$, $E_1,~E_2 \in \doubleR^{m\times n}$, we have
\begin{equation}
\begin{aligned}
    Z&=I+E_1\\
    -\alpha Z &= \beta I + E_2
\end{aligned}
\end{equation}
Solving the linear equation systems above, 
\begin{equation}
    \begin{aligned}
    Z&=\frac{\beta}{\alpha+\beta}E_1-\frac{1}{\alpha+\beta}E_2\\
    I&=-\frac{\alpha}{\alpha+\beta}E_1-\frac{1}{\alpha+\beta}E_2
    \end{aligned}
\end{equation}
Then $\mathbf{G}=(Z,-\alpha Z) = (\frac{\beta}{\alpha+\beta}E_1-\frac{1}{\alpha+\beta}E_2, -\frac{\alpha\beta}{\alpha+\beta}E_1+\frac{\alpha}{\alpha+\beta}E_2) +\mathbf{E}_1 \in \mathcal{B}$. 
\end{proof}

\subsection{Hodge Decomposition of Example \ref{equivalent example}}

See in Figure \ref{hodge_decomposition_of_example1}.

\section{Proofs in Section \ref{sec:shapley}}

\subsection{Detailed derivation from DFP to CFP}
Take $\bfp(\cdot)$ as an example:  

In DFP, at each discrete time step, the empirical mixed strategies of the agent are updated according to $\bfp(t+1)= \frac{t}{t+1}\bfp(t)+\frac{1}{t+1}\BR_1(\bfq(t))$. The updated strategy is a weighted linear combination of $\bfp(t)$ and $\BR_1(\bfq(t))$. 

Now consider the continuous version of FP: For {\bf{continuous}} time variable $t\in[1,+\infty)$, let $\delta$ denote the time between two adjustments (like $1$ between $t$ and $t+1$), change the weights $\frac{t}{t+1}$ and $\frac{1}{t+1}$ with $\frac{t}{t+\delta}$ and $\frac{\delta}{t+\delta}$ respectively, we have 
\begin{equation*}
    \bfp(t+\delta)= \frac{t}{t+\delta}\bfp(t)+\frac{\delta}{t+\delta}\BR_1(\bfq(t)).
\end{equation*} 

Subtract $\bfp(t)$ and divide by $\delta$ on both sides, we got 
\begin{equation*}
    \frac{\bfp(t+\delta)-\bfp(t)}{\delta}= \frac{-\bfp(t)+\BR_1(\bfq(t))}{t+\delta}.
\end{equation*}

By letting $\delta\rightarrow 0$, denote the derivatives of $\bfp$ as $\dot{\bfp}$, we got the dynamical systems for CFP: 
\begin{equation*}
    \dot{\bfp}(t) = \frac{\operatorname{BR}_1(\bfq(t)) - \bfp(t) }{t}, ~ \dot{\bfq}(t) = \frac{\operatorname{BR}_2(\bfp(t)) - \bfq(t)}{t}.
\end{equation*}

\subsection{Derivation of Equation \ref{eqn:solve_u}}

In a solution $(\bfp(t),\bfq(t))$ of CFP, by Lemma 4 in  \cite{hofbauer2006best}, for almost all $t>1$, $\dot{U}(t)$ exists and 
\begin{equation}\label{eqn:bru_derivative}
\small{
\begin{aligned}
\dot{U}(t)=&\BR_1(\bfq(t))^\top A \dot{\bfq}(t)+\dot{\bfp}(t)^\top B\BR_2(\bfp(t))\\
=&\BR_1(\bfq(t))^\top A \left(\frac{\BR_2(\bfp(t))-\bfq(t)}{t}\right) \\
& + \left(\frac{\BR_1(\bfq(t))-\bfp(t)}{t}\right)^\top B \BR_2(\bfp(t))\\
=&\frac{1}{t}\left[\BR_1(\bfq(t))^\top (A+B) \BR_2(\bfp(t))\right.\\
&\phantom{\frac{1}{t}[}- \left.\underbrace{\left(\BR_1(\bfq(t))^\top A \bfq(t) + \bfp(t)^\top B\BR_2(\bfp(t)) \right)}_{U(t)}\right]. 
\end{aligned}}
\end{equation}
If $\mathbf{G}\in\mathcal{Z}$, then $A+B=0$ and $\dot{U}(t)$ is always non-positive. If $\mathbf{G}\in \mathcal{B}$, then $\BR_i(\cdot)$ is independent on its variables. 
Thus, $\dot{U}(t)$ is always zero.
Eqn. (\ref{eqn:bru_derivative}) also means
\begin{equation*}
    t\dot{U}+U=\frac{d(tU)}{dt} = \BR_1(\bfq(t))^\top (A+B) \BR_2(\bfp(t)).
\end{equation*}

\begin{figure*}[htbp]
    \begin{subfigure}{0.49\textwidth}
    \centering
    \includegraphics[width=\linewidth]{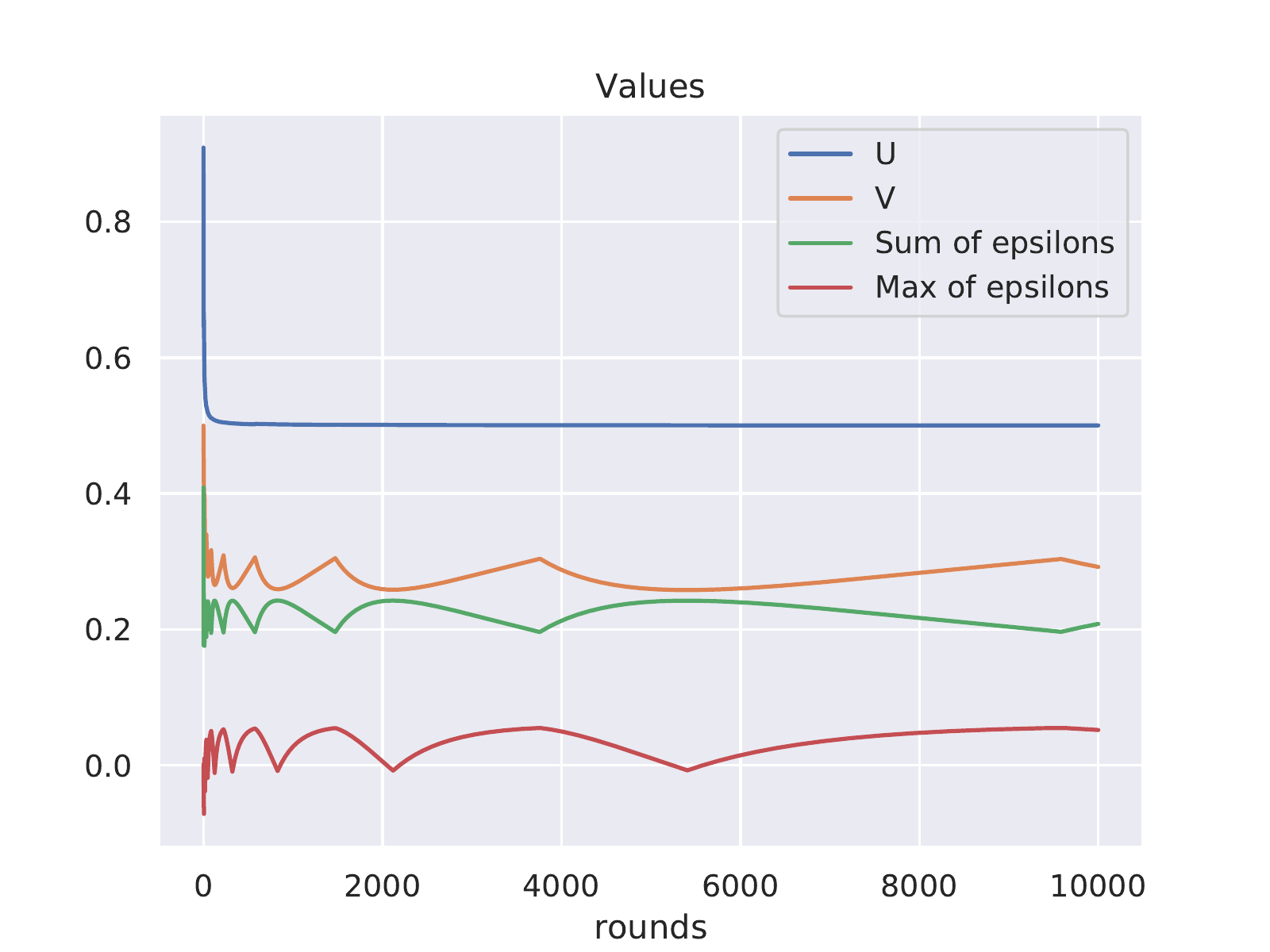}
    \caption{Value changes with iterations on the Shapley game. }
    \label{fig:shapley1}
    \end{subfigure}
    \begin{subfigure}{.49\textwidth}
    \centering
    \includegraphics[width=\linewidth]{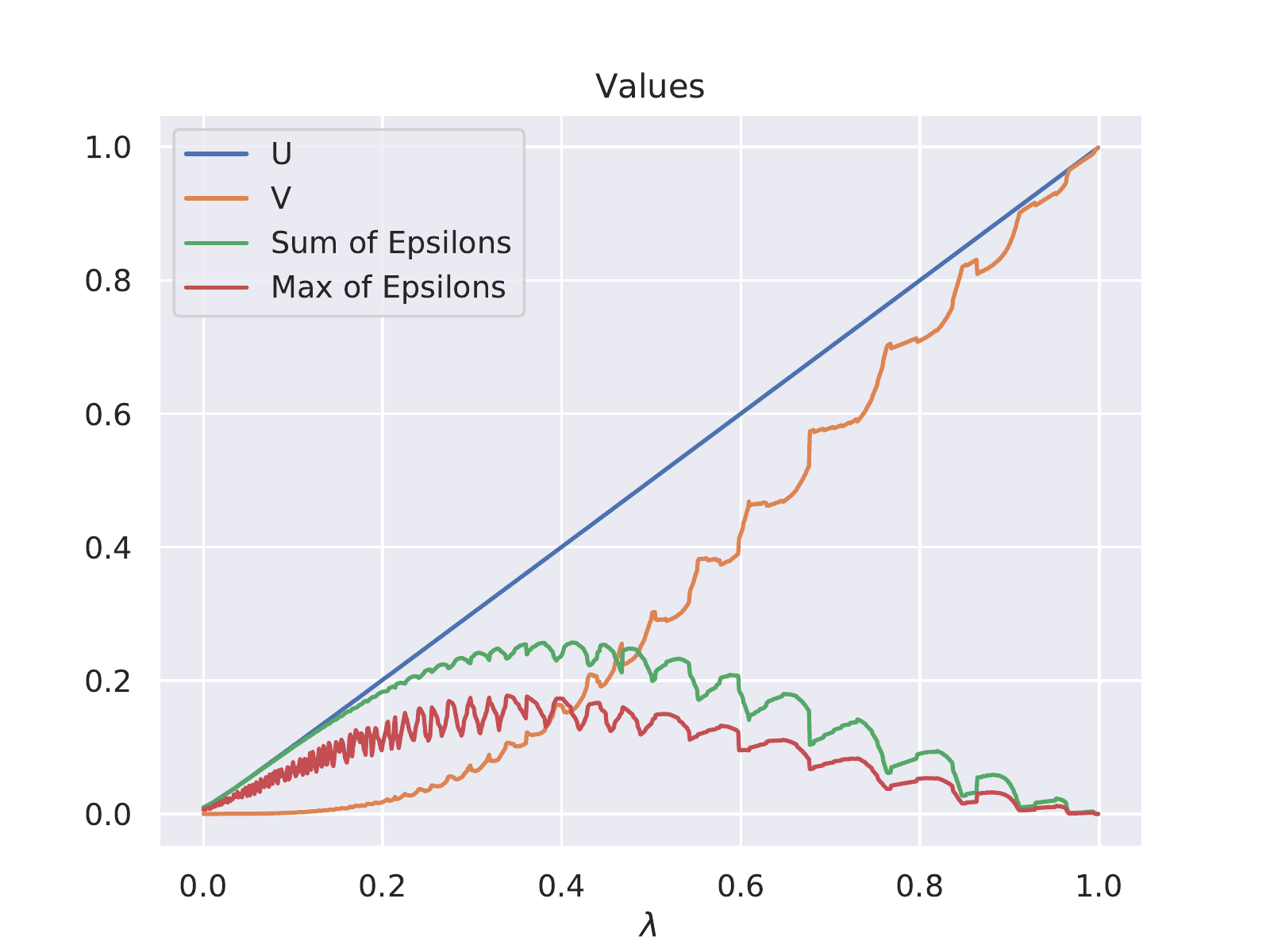}
    \caption{Value changes at termination with $\lambda$ changes.  }
    \label{fig:shapley2}
    \end{subfigure} 
\caption{The results of two experiments on the Shapley game. In two experiments, $U$ shows good smoothness. The overall trend of $ME$ is a one-peak function. The fluctuations on the overall trend can be explained that when $\lambda$ and $\lambda$ is close enoug, the sequences enter the same cycle, but are at different points of the cycle when the algorithm terminates, due to the small difference of $\lambda$. }
\label{fig:shapley analysis}
\end{figure*}

Then for any fixed $t_0>0$, we have
\begin{equation*}
\begin{aligned}
U(t) &= \frac{t_0U(t_0)}{t} +\frac{1}{t} \int^t_{t_0} \BR_1(\bfq(\tau))^\top (A+B) \BR_2(\bfp(\tau)) \mathrm{d}\tau\\
 &\xlongequal{G\coloneqq A+B}\frac{t_0U(t_0)}{t} +\frac{1}{t} \int^t_{t_0}  G_{i(\tau),j(\tau)}\mathrm{d}\tau,
\end{aligned}    
\end{equation*}

\subsection{Proof of Lemma \ref{lem:bru_converge}}

\lembruconverge*

\begin{proof}
Eqn. (\ref{eqn:solve_u}) builds a connection between BRU and the best response pair.
Precisely, if a CFP follows a cycle $\mathrm{C}$ as define in Def. \ref{def:cycle}, let $t_0=T$, the first term $\frac{t_0U(t_0)}{t}\rightarrow 0$ as $t\rightarrow +\infty$, while the second term (the integral) takes the values $G_{i_1, j_1}, \dots, G_{i_K, j_K}$ periodically. 
This leads to the sufficient condition for the convergence of BRU, if a CFP follows a cycle.
\end{proof}

\subsection{Proof of Theorem \ref{thm:lyapnuov}}

\thmlyapunov*

\begin{proof}
Suppose from some initial point $(\mathbf{p}(0),\mathbf{q}(0))$, $\mathbf{G}_\lambda$ enters cycle $C$, since UNE is the unique equilibrium of $\mathbf{G}_0$ and CFP converges on $\mathbf{G}_0$, UNE is also in $\operatorname{Conv}(P(\mathrm{C},\mathbf{G}))$ and $\operatorname{Conv}(P(\mathrm{C},\mathbf{G}))\cap \mathcal{X}(\mathbf{G})\neq \emptyset$.  Since $\dot{U}(\mathbf{p}(t),\mathbf{q}(t))<0$  for all $t>0$, $U$ is strictly decreasing and bounded, and thus converge. Then $\dot{U}$ will converge to 0. Since the only possible point for $\dot{U}$ to be zero is the equilibria in $\operatorname{Conv}(P(\mathrm{C},\mathbf{G}))$, by simple calculus analysis, $(\mathbf{p}(t),\mathbf{q}(t))$ will converge to NE. 

If $G_{\lambda}$ is degenerate, it is possible there exists a local area where every point in the area is a NE of $G_{\lambda}$. For example, if both $(\bfe_1,\bfq)$ and $(\bfe_2,\bfq)$ are NE, then $(\gamma \bfe_1+(1-\gamma)\bfe_2,\bfq)$ ($\gamma\in[0,1]$) forms a convex set of NE. In this case, we can only prove CFP converges to such a set of NE. On the contrast, if $G_{\lambda}$ is non-degenerate, each of its NE must be an isolated point, so that CFP converges to one specific NE.
\end{proof}

\subsection{Experiments on the Shapley game}

Given a strategy pair $(\mathbf{p},\mathbf{q})$, define the following values:
\begin{equation}
    \small{\begin{aligned}
 U(\bfp,\bfq) &= \max_{i\in[m]} (A\bfq)_i +\max_{j\in [n]}(\bfp B)_j \\
V(\mathbf{p}, \mathbf{q}) &=\mathbf{p}^{T}(A+B) \mathbf{q} \\
\text{SE}(\mathbf{p}, \mathbf{q}) &=U(\bfp,\bfq)-V(\bfp,\bfq)\\
\text{ME}(\mathbf{p}, \mathbf{q}) &=\max \left\{\max_{i\in [m]} (A \mathbf{q})_i-\mathbf{p}^{T} A \mathbf{q}, \max_{j \in [n]} (\mathbf{p}^{T} B)_j-\mathbf{p}^{T} B \mathbf{q}\right\}
\end{aligned}}
\end{equation}

First consider how different values related to the beliefs of the agents change in each round. We run the algorithm on the Shapley game for 10,000 rounds with initial point $(1,2)$ on $\mathbf{G}(\lambda)$. The result is shown in Figure \ref{fig:shapley1}, where the line with label ``Sum of epsilons'' is the change trajectory of $SE$, and the line with label ``Max of epsilons'' is the change trajectory of ``ME''. Since we start the sequence from non-symmetric initial condition, from which the FP will not converge, all the other values except $U$ has fluctuations which has no sign of convergence. While $U$, the best response utility function, is quite smooth and converge in the first 2000 rounds.

Given the decomposition of the Shapley game $\mathbf{G}=\mathbf{P}+\mathbf{H}+\mathbf{E}$, as shown in Figure \ref{eqn:shapley}.  Let $\mathbf{G}(\lambda)=\lambda \mathbf{P}+(1-\lambda)\mathbf{H}$. We set the step length of $\lambda$ to be $0.001$ and for each $\lambda$, we run FP for 10,000 rounds with initial condition $(1,2)$ on $\mathbf{G}(\lambda)$ and see how those values varies when the algorithm terminates. First we can find out that with $\lambda$ changing, the best response utility function $U$ still has good smoothness, which implies that it may have some desirable property for us to explore and exploit. Then look at values related to the approximation error: $SE$ and $ME$. When one wants to find an approximation solution for NE, we often use $ME$ to evaluate how good the approximation is. 

Though these two values have fluctuations, we can find out there is a overall trend that the epsilon-related values first increase and then decrease, as $\lambda$ changes from $0$ to $1$. Note that at $\lambda = 0$ and $\lambda = 1$, $\mathbf{G}(\lambda)$ is a harmonic game and a potential game, respectively, on which FP converges. We can reasonably make a conjecture that the overall trend of the approximation error $ME$ is actually an one-pick function. Thus if at any point $\lambda$ in middle of open interval $(0,1)$, $ME=0$, then on all $\lambda$s between 0 and 1, FP will also converge on $\mathbf{G}(\lambda)$. 

As for the fluctuations on the overall trend of epsilon-related values, we can also make such a explanation: When $\lambda$ changes a little bit to $\lambda^\prime$, the FP sequences will enter the same cycle. However, the difference of two $\lambda$s cause the specific strategies along the cycle and the cycle length to be different. Thus when sequences related to $\lambda$ and $\lambda^\prime$ terminate on the same rounds, they will have different approximation error due to the different ``location'' on the cycle. 




\small{
\section*{Acknowledgments}
This work is supported by Science and Technology Innovation 2030 –“The Next Generation of Artificial Intelligence” Major Project No. (2018AAA0100901).
\bibliographystyle{named}
\bibliography{ijcai22}
}


\end{document}